\documentclass[conference]{IEEEtran}
\IEEEoverridecommandlockouts
\usepackage{cite}
\usepackage{amsmath,amssymb,amsfonts}
\usepackage{algorithmic}
\usepackage{graphicx}
\usepackage{textcomp}
\usepackage{xcolor}

\usepackage{amsthm}
\usepackage{comment}
\usepackage[nolist]{acronym}

\usepackage{pgfplots}
\pgfplotsset{compat=1.18}

\definecolor{epszero}{RGB}{0,0,255}      % Blue
\definecolor{epsfive}{RGB}{255,128,0}    % Orange
\definecolor{epstwenty}{RGB}{0,128,0}    % Green

\newtheorem{definition}{Definition}

\newtheorem{example}[definition]{Example}
\newtheorem{theorem}[definition]{Theorem}

\newtheorem{remark}[definition]{Remark}
\newtheorem{corollary}[definition]{Corollary}
\newtheorem{proposition}[definition]{Proposition}

\renewcommand{\thealgorithm}{\Alph{algorithm}}

\newcommand{\cond}{\mathsf{p}}
\newcommand{\bin}{\ensuremath{\operatorname{Bin}}}

\def\BibTeX{{\rm B\kern-.05em{\sc i\kern-.025em b}\kern-.08em
    T\kern-.1667em\lower.7ex\hbox{E}\kern-.125emX}}

\begin{document}

\title{Probabilistic GC-Constraints for Composite DNA
\vspace{-0.2cm}
%%% Many authors with many affiliations:
\author{%
  \IEEEauthorblockN{
    \textbf{Humeyra Bodur}\IEEEauthorrefmark{1},
    \textbf{Frederik Walter}\IEEEauthorrefmark{1}, and
    \textbf{Antonia Wachter-Zeh}\IEEEauthorrefmark{1}
                    }
  \IEEEauthorblockA{\IEEEauthorrefmark{1}%
                   School of Computation, Information and Technology,  Technical University of Munich, Germany,
                    }
    
    Emails: \{humeyra.bodur, frederik.walter, antonia.wachter-zeh\}@tum.de
    \vspace{-0.6cm}
}
\thanks{Funded by the European Union (DiDAX, 101115134). Views and opinions expressed are however those of the author(s) only and do not necessarily reflect those of the European Union or the European Research Council Executive Agency. Neither the European Union nor the granting authority can be held responsible for them.}
}

% Acronyms 
\begin{acronym}
    \acro{nbc}[NBC]{natural binary code}
    \acro{pcr}[PCR]{\emph{Polymerase Chain Reaction}}
    \acro{pmf}[PMF]{\emph{probability mass function}}
    \acro{rll}[RLL]{run-length-limited}
\end{acronym}

\maketitle

\begin{abstract}
This paper addresses the challenge of encoding biochemical GC-content constraints in composite DNA-based data storage. Previous deterministic models impose high rate penalties by avoiding any possibility for a strand to fall outside the allowed range. To account for the stochastic nature of composite DNA, we introduce an $\epsilon$-probabilistic constraint framework, and derive capacity bounds for global constraints using composition types and for local sliding-window constraints via finite-state Markov chains. Furthermore, we propose a capacity-achieving multi-type enumerative encoder. 
\end{abstract}

\begin{IEEEkeywords}
DNA-based data storage, composite DNA, constrained coding, GC-content, channel capacity
\end{IEEEkeywords}

\section{Introduction}

% \IEEEPARstart{}{} creates a tall first letter for this first paragraph
\IEEEPARstart{I}{n} today’s information-driven world, efficiently storing large amounts of data for long periods is a significant challenge. DNA-based data storage has recently emerged as a promising solution for this \cite{Church2012DNAStorage,goldman2013}. It takes advantage of the natural ability of DNA molecules to encode vast amounts of information in an extremely compact form.

The classical approach to storing information in DNA, without error correction, maps a binary alphabet $\{0,1\}$ to the four nucleotides $\{A,C,G,T\}$, thereby encoding information directly into DNA sequences. However, commonly used DNA synthesis does not produce only a single copy of the sequence. Instead, the synthesis process inherently generates multiple copies of the same strand \cite{sabary2024reconstruction}. 
Many DNA-based data storage schemes treat this multiplicity primarily as redundancy for error correction, which reduces the effective information density per synthesized position \cite{anavy2019,Church2012DNAStorage}.

Composite DNA has been proposed as an alternative encoding strategy to exploit the many-copy nature of DNA synthesis~\cite{anavy2019}. In composite DNA, each position in a DNA strand represents a probabilistic composition of nucleotides instead of encoding a single, fixed base. Specifically, a composite symbol at position $i$ can be represented as a tuple $(p_A, p_C, p_G, p_T)$, where each $p_x$ denotes the relative proportion of nucleotide $x \in \{A, C, G, T\}$ and $p_A + p_C + p_G + p_T = 1$. Figure~\ref{fig:classical_vs_composite} illustrates a side-by-side comparison of classical and composite DNA encoding schemes, highlighting the structural differences between the two approaches. 

\begin{figure}[t]
    \centering
    \includegraphics[width=0.7\linewidth]{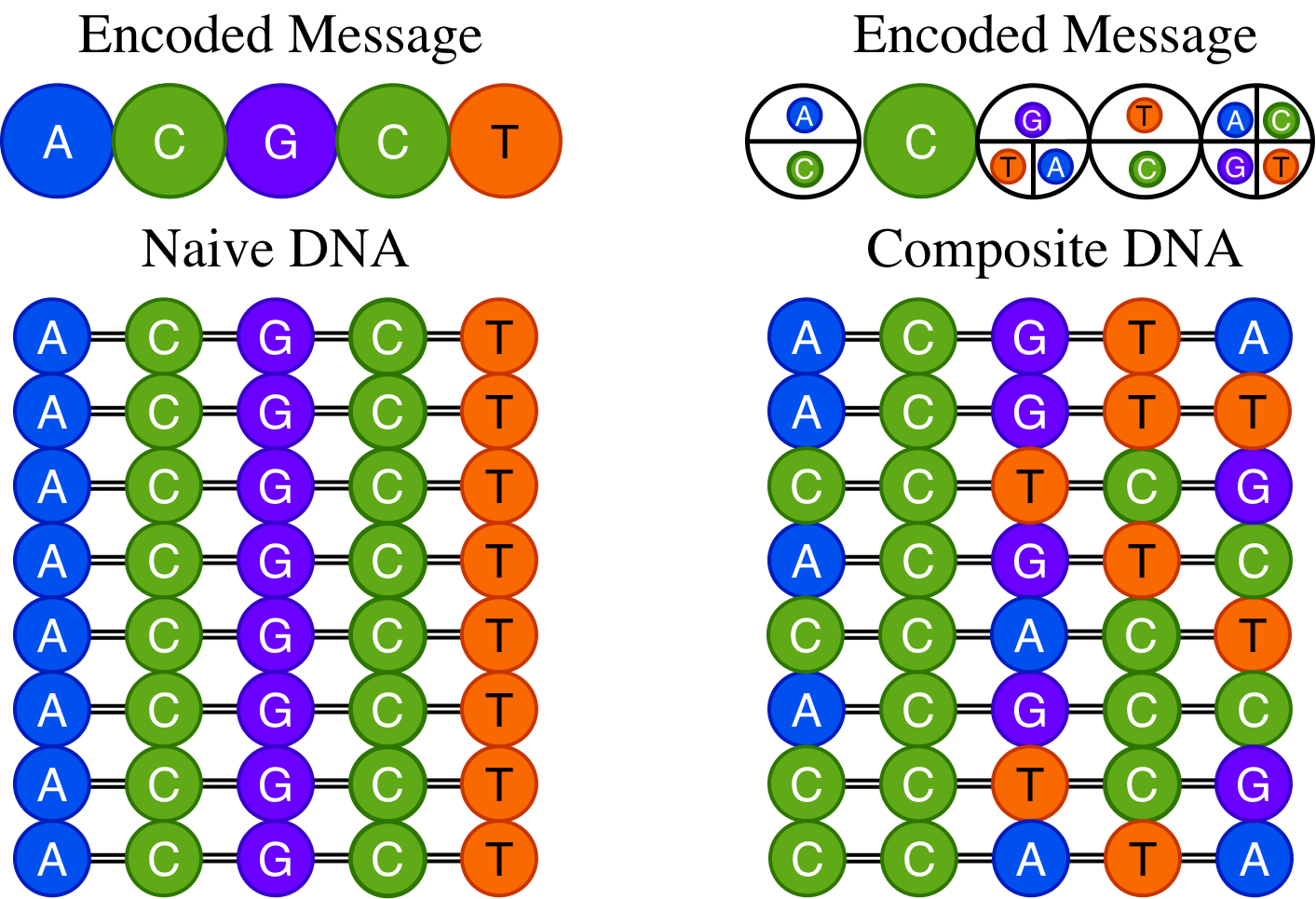}
    \caption{Classical (left) and composite (right) DNA encoding schemes.}
    \label{fig:classical_vs_composite}
\end{figure}

While composite DNA offers gains in theoretical information density, it also introduces fundamental challenges to ensure that synthesized sequences satisfy essential biochemical constraints. To secure reliable chemical processing, DNA sequences must avoid extreme variations in GC content. Excessively GC-rich or AT-rich sequences suffer from skewed melting temperatures, leading to amplification bias, non-uniform read coverage, and sequence dropout \cite{Erlich2017DNAFountain, Heckel2019Characterization, Yazdi2016DNAStorage}. 

As in traditional optical recording and communication systems, where physical channel properties restrict allowable sequences, constrained coding provides a general information-theoretic framework for accommodating these biochemical limits \cite{optical_constrained}. Rather than generating arbitrary sequences and blindly discarding unsuitable ones, this approach ensures that digital data is mapped directly and exclusively to valid, biologically compliant strands.

In classical DNA storage, constraint evaluation is deterministic. The transition to composite symbols introduces a different problem: the physical nucleotide sequence obtained during synthesis is probabilistic. Despite this, constrained coding for composite DNA remains relatively underexplored. Recently, Nguyen et al. \cite{nguyen} investigated both \ac{rll} and global GC-content constraints in composite DNA, calculating constrained capacities for small composite alphabets using adjacency matrices. Their capacity evaluations and subsequent code constructions relied on a strict, worst-case deterministic framework utilizing sequence replacement and Knuth's algorithm. A theoretical analysis of \ac{rll} coding was presented in \cite{walter2025coding, dang2026rll}.

A critical limitation of existing deterministic approaches is that enforcing zero-tolerance compliance against worst-case realizations introduces high rate penalties. In this paper, we shift to a \emph{probabilistic} constraint. This shift is justified by the physical and informational realities unique to the stochastic nature of composite DNA.

By allowing a controlled probability of constraint violation (denoted by $\epsilon$), we are able to expand the admissible code space. In this work, we develop a comprehensive framework for the analysis and encoding of composite DNA sequences subject to $\epsilon$-probabilistic GC constraints. We determine the capacity and
propose an efficient multi-type arithmetic indexing scheme that achieves it. The remainder of this paper details the constraint formulations (Section \ref{sec:preliminaries}), capacity formulations and analysis (Section \ref{sec:capacity}), and the proposed encoding scheme (Section \ref{sec:coding_schemes}). Proofs and examples are deferred to the appendix; short sketches are given for the main theorems.

\section{Model and Preliminaries}
\label{sec:preliminaries}

\subsection{Composite DNA Representation and Probability Bins}

The base alphabet for composite DNA consists of the four base nucleotides $\mathcal{X} = \{\text{A, C, G, T}\}$ \cite{anavy2019}. 

\begin{definition}[Composite DNA Symbol]
A composite DNA symbol $S$ over the alphabet $\mathcal{X}$ is defined by the integer counts
\[
S = (n_A, n_C, n_G, n_T), \qquad \sum_{x \in \mathcal{X}} n_x = k,
\]
where $k$ is called the resolution parameter and $n_x$ represents the count of nucleotide $x$ used to form the composite symbol. Equivalently, $S$ represents an empirical distribution $p_S(x) = n_x / k$ over $\mathcal{X}$. We denote the set of all composite symbols with resolution $k$ as the composite alphabet $\mathcal{A}_k$.
\end{definition}

The size of the composite alphabet is $|\mathcal{A}_k| = \binom{k+3}{3}$.
For the synthesis, a composite sequence $\mathbf{S} = [S_1, \ldots , S_n]$ is fed to the synthesizer. 
For each of the multiple synthesized strands, at position $i$, a nucleotide is chosen at random according to the distribution given by $S_i$.
We emphasize that the synthesis process does not provide control over which specific strand receives which nucleotide. Only the overall composition of nucleotides at each position is specified by the composite symbol. As a result, each nucleotide at position $i$ in a strand can be viewed as an independent realization drawn from the distribution associated with $S_i$.

Assume we sequence $M$ strands from the pool of a synthesized composite DNA sequence of length $n$. 
We represent the $m$-th strand as
\[
\mathbf{D}_m = [D_{m,1}, D_{m,2}, \dots, D_{m,n}] \in \mathcal{X}^n, \quad m \in \{1, \dots, M\},
\]
where $D_{m,i}$ denotes the nucleotide of strand $m$ at position $i$. In this view, the array of $M$ strands is the $M \times n$ matrix $\mathbf{D}$ whose $m$-th row is $\mathbf{D}_m$.
Note that each $D_{m,i}$ is a realization of a random variable over the distribution given by $S_i$.
To evaluate sequences efficiently, we group symbols by their intrinsic GC probability mass as follows. 

\begin{definition}[Probability-Mass Binning]
\label{def:binning}
For a resolution parameter $k$, the \emph{GC probability mass} is defined as $q(S) = p_S(G) + p_S(C)$. As it takes discrete values in $\{0, \frac{1}{k}, \dots, 1\}$,  we partition $\mathcal{A}_k$ into $k + 1$ disjoint probability bins $\mathcal{Q} = \{\mathcal{B}_0, \dots, \mathcal{B}_k\}$, where all symbols in bin $\mathcal{B}_i$ share the same GC probability mass, meaning
$\mathcal{B}_i = \{S \in \mathcal{A}_k : q(S) = \frac{i}{k} \}$.
\end{definition}

The cardinality of each bin is determined by the number of ways to independently partition $i$ elements into the $\{G, C\}$ bases and $k-i$ elements into the $\{A, T\}$ bases, yielding
\begin{equation}
|\mathcal{B}_i| = (i+1)(k-i+1).
\end{equation}

\subsection{GC-Constraint and the Probabilistic Shift}

Let \(\mathbb{I}_{\min}(S)\) and \(\mathbb{I}_{\max}(S)\) be indicator functions representing the guaranteed minimum and potential maximum physical GC contributions of a composite symbol \(S\):
\vspace{-0.2cm}
\begin{align}
    \mathbb{I}_{\min}(S) = 
    \begin{cases} 1, & \!\!\!\! \text{if } q(S) \! = \! 1 \\ 
    0, & \!\!\!\! \text{otherwise} \end{cases},
    \mathbb{I}_{\max}(S) = 
    \begin{cases} 1, & \!\!\!\! \text{if } q(S) \! > \! 0 \\ 
    0, & \!\!\!\! \text{otherwise.} \end{cases}
\end{align}
\vspace{-0.4cm} \\
Using these indicator functions, we define the deterministic GC constraint as follows: 

\begin{definition}[Deterministic GC Constraint]\label{def:det_gc}

    Let \(\mathbf{S} = [S_1, \dots, S_n] \in \mathcal{A}_k^n\). We say \(\mathbf{S}\) complies with the deterministic GC constraint with limits \(\alpha \le \beta \in [0,1]\) if
    \vspace{-0.2cm}
    \begin{align}
        n\alpha \leq \sum_{i=1}^{n} \mathbb{I}_{\min}(S_i) \quad \text{and} \quad \sum_{i=1}^{n} \mathbb{I}_{\max}(S_i) \leq n \beta.
    \end{align}
    \vspace{-0.4cm} \\
    We further say that \(\mathbf{S}\) complies with the \(w\)-windowed deterministic GC constraint with limits \(\alpha \le \beta \in [0,1]\) if any subsequence \(\mathbf{S}' = [S_i, \ldots , S_{i+w-1}]\) for \(i \in \{ 1, \ldots , n-w+1 \} \) complies with the deterministic GC constraint.
\end{definition}
While strict deterministic constraints are necessary for classical DNA storage, applying this worst-case, zero-tolerance approach to composite DNA causes rate loss. This strictness ignores two practical realities of composite synthesis.

(i) Random Chemistry \& Redundancy: DNA synthesis is a random process that creates millions of copies of each sequence. A worst-case physical strand is a rare anomaly, not a guaranteed outcome. (ii) Isolated Failures: If a rare, highly biased strand does form and drop out, it is an isolated erasure. It does not corrupt the rest of the molecular pool and can potentially be recovered using suitable error correction codes.
The following example illustrates why shifting to a probabilistic threshold is much more practical.

\begin{example}[Pessimism of Deterministic Bounds]
    Let \(k=4\), and let the standard acceptable GC limits for a sequence of length \(n=10\) be \(\alpha = 0.4\) and \(\beta = 0.6\). This requires the physical strand to contain between 4 and 6 GC nucleotides. 
    Consider a composite sequence \(\mathbf{S}\) consisting of 4 pure G/C symbols (\(q(S)=1\)), 3 pure A/T symbols (\(q(S)=0\)), and 3 mixed symbols biased toward A/T (e.g., 3 adenines and 1 cytosine, yielding \(q=0.25\)).
    
    Under the deterministic framework, we evaluate the worst-case bounds. The 4 pure G/C symbols guarantee the lower bound (\(\sum \mathbb{I}_{\min} = 4 \ge 4\)). However, the deterministic upper bound assumes all 3 mixed symbols will, in the worst case, manifest as cytosine. This yields a maximum potential weight of \(\sum \mathbb{I}_{\max} = 4 + 3 = 7\). Since \(7 > 6\), the sequence \(\mathbf{S}\) is rejected as invalid.
    
    Analyzing the strand probabilistically reveals a different reality. As the pure symbols are fixed, the sequence only violates the upper bound if \textit{all three} mixed symbols simultaneously manifest as cytosine. The probability of this single worst-case realization is \((0.25)^3 = 1/64 \approx 0.016\). This means that \(98.4\%\) of the millions of physical strands synthesized from this sequence will have exactly 4, 5, or 6 GC nucleotides, perfectly complying with the strict 40\%--60\% thermodynamic requirements. Discarding this sequence represents a massive loss of coding capacity. 

    Allowing the sequence, on the other hand, would only slightly bias the composite pool. Therefore, the GC constraint is better evaluated probabilistically, setting the threshold on the probability $\epsilon$ of violation rather than the worst-case realization.
\end{example}

For a given composite sequence \(\mathbf{S} = [S_1, \dots, S_n] \in \mathcal{A}_k^n\), the physical manifestation of a G or C nucleotide at any position \(j \in \{1, \ldots, n\}\) is an independent Bernoulli trial. 

Let \(X_j \sim \text{Bernoulli}(q(S_j))\) represent the binary physical GC contribution of the \(j\)-th symbol. The total physical GC count of the sequence is the sum of these independently distributed random variables, denoted as \(Y = \sum_{j=1}^{n} X_j\). Consequently, \(Y\) follows a Poisson binomial distribution, allowing us to enforce a precise error tolerance \(\epsilon \in [0, 1)\).

\begin{definition} [\(\epsilon\)-Probabilistic GC Constraint]\label{def:prob_gc}
    Let \(\mathbf{S} = [S_1, \dots, S_n] \in \mathcal{A}_k^n\) be a composite sequence of length \(n\). Furthermore, let \(Y\) be the random variable depicting the total physical GC count of a DNA strand synthesized from $\mathbf{S}$. We say \(\mathbf{S}\) complies with the \(\epsilon\)-probabilistic GC constraint with limits \(\alpha \le \beta \in [0,1]\) and error tolerance $ \epsilon \in [0, 1) $ if
    \begin{equation}
        \Pr\left(Y < n\alpha\right) + \Pr\left(Y > n\beta\right) \leq \epsilon.
    \end{equation}
    We further say that \(\mathbf{S}\) complies with the \(w\)-windowed \(\epsilon\)-probabilistic GC constraint if any subsequence \(\mathbf{S}' = [S_i, \ldots , S_{i+w-1}]\) for \(i \in \{1, \ldots , n-w+1\}\) complies with the \(\epsilon\)-probabilistic GC constraint.
\end{definition}

\begin{remark}[Reduction to Deterministic Case]
For a strict error tolerance of \(\epsilon = 0\), the probabilistic approach exactly reduces to the deterministic worst-case baseline. 
\end{remark}

\section{Capacity Formulations}\label{sec:capacity}

\subsection{Global Composition Types and Error Compliance}

Evaluating the exact \ac{pmf} for every possible sequence individually is computationally intractable. We group sequences into composition types that all share the same Poisson binomial \ac{pmf}.

\begin{definition}[Sequence Type~\cite{cover_thomas_types}]
\label{def:type}
Given a sequence \(\mathbf{S} \in \mathcal{A}_k^n\), we define the type of \(\mathbf{S}\) as \(\mathbf{t}(\mathbf{S}) = (t_0(\mathbf{S}), \dots, t_k(\mathbf{S}))\), where \(t_i(\mathbf{S})\) denotes the number of positions of \(\mathbf{S}\) assigned to probability bin \(\mathcal{B}_i\) as defined in Definition \ref{def:binning}. Furthermore, given a vector \(\mathbf{t} = (t_0, \ldots , t_k)\) such that \(t_i \geq 0\) for all \(i \in \{0, \ldots , k \} \) and \(\sum_{i=0}^k t_i = n\), we denote by
\begin{equation*}
    \mathcal{T}_\mathbf{t} = \left\{ \mathbf{S} \in \mathcal{A}_k^n : \mathbf{t}(\mathbf{S}) = \mathbf{t} \right\}
\end{equation*}
the type-class of type \(\mathbf{t}\).
\end{definition}

To compute the Poisson binomial distribution for a given type, we use the fact that
the sum of independent discrete random variables has a \ac{pmf} given by the
discrete convolution of their individual \acp{pmf}~\cite{Grinstead2025Sums}.
For two independent discrete random variables \(A\) and \(B\) with \acp{pmf}
\(f\) and \(g\), the \ac{pmf} of their sum is \((f*g)(z)=\sum_x f(x)g(z-x)\).

\begin{proposition}[Exact \ac{pmf} via Convolution]
\label{prop:exact_pmf}
For a sequence of type \(\mathbf{t}=(t_0,\ldots,t_k)\), let
\(
Y_i \sim \bin\left(t_i,\frac{i}{k}\right)
\)
denote the binomially distributed GC count contributed by the symbols in bin
\(\mathcal{B}_i\), and let \(f_i\) denote the \ac{pmf} of \(Y_i\). Then the \ac{pmf} of the total GC count
\(
Y=\sum_{i=0}^{k}Y_i
\)
is
\(
f_Y=f_0*f_1*\cdots*f_k.
\)
\end{proposition}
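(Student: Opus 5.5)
The plan is to reduce the statement to two elementary facts. First, the total GC count of a type-$\mathbf{t}$ sequence splits bin by bin into independent binomial pieces. Second, the \ac{pmf} of a sum of independent discrete random variables is the convolution of their \acp{pmf}.

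\textbf{Step 1 (regrouping).} Fix any $\mathbf{S}\in\mathcal{T}_\mathbf{t}$. Write $Y=\sum_{j=1}^n X_j$ with $X_j\sim\mathrm{Bernoulli}(q(S_j))$ mutually independent, as in the synthesis model of Section~\ref{sec:preliminaries}. Let $J_i=\{j: S_j\in\mathcal{B}_i\}$. By Definition~\ref{def:type} we have $|J_i|=t_i$, and the sets $J_0,\dots,J_k$ partition $\{1,\dots,n\}$. We then define $Y_i=\sum_{j\in J_i}X_j$, so that $Y=\sum_{i=0}^k Y_i$.

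\textbf{Step 2 (binomial pieces).} By Definition~\ref{def:binning}, every $j\in J_i$ has $q(S_j)=i/k$. Hence $Y_i$ is a sum of $t_i$ i.i.d.\ $\mathrm{Bernoulli}(i/k)$ variables, which gives $Y_i\sim\bin(t_i,i/k)$. The degenerate cases are consistent with this: $t_i=0$ gives the point mass at $0$, and $i=0$ or $i=k$ gives point masses at $0$ and $t_i$ respectively. The variables $Y_0,\dots,Y_k$ are independent because they are functions of disjoint subfamilies of the independent family $(X_j)$, by the grouping property of independence. This also shows that the law of $Y$ depends only on $\mathbf{t}$ and not on which $\mathbf{S}\in\mathcal{T}_\mathbf{t}$ was chosen.

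\textbf{Step 3 (convolution).} Apply induction on the number of summands. Set $Z_m=Y_0+\dots+Y_m$. Then $Z_m$ is independent of $Y_{m+1}$, again by the grouping property. The two-variable rule $(f*g)(z)=\sum_x f(x)g(z-x)$ quoted before the proposition therefore gives $f_{Z_{m+1}}=f_{Z_m}*f_{m+1}$. Associativity of discrete convolution lets us write the result unambiguously as $f_0*\cdots*f_k$; it follows from exchanging finite sums, or equivalently from the fact that the sum $Z_{m+1}$ itself is associative. Commutativity also holds, though it is not needed.

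The only step needing care is the independence claim in Step 2, since the convolution formula requires independence of $Z_m$ and $Y_{m+1}$ and not merely pairwise independence. I would handle it by citing the standard result that measurable functions of disjoint blocks of independent variables are independent. Everything else is bookkeeping.
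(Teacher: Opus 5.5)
Your proof is correct and takes the same route as the paper's. You group the independent Bernoulli GC contributions by bin to obtain $Y_i \sim \bin(t_i, i/k)$ and then apply the convolution rule for sums of independent discrete variables. You also make explicit two points the paper leaves implicit: the grouping property that yields mutual independence of $Y_0,\dots,Y_k$, and the induction with associativity that extends the two-variable convolution to $k+1$ factors.
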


\begin{proposition}[Type-Dependent Compliance]
\label{prop:type_compliance}
For any two composite sequences \(\mathbf{S}_1, \mathbf{S}_2 \in \mathcal{T}_\mathbf{t}\), \(\mathbf{S}_1\) complies with the global \(\epsilon\)-probabilistic GC constraint if and only if \(\mathbf{S}_2\) complies.
\end{proposition}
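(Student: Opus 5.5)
The plan is to show that the violation probability in Definition~\ref{def:prob_gc} depends on \(\mathbf{S}\) only through its type \(\mathbf{t}(\mathbf{S})\). Once this holds, \(\mathbf{S}_1\) and \(\mathbf{S}_2\) have the same value of \(\Pr(Y<n\alpha)+\Pr(Y>n\beta)\). Hence one satisfies the \(\leq\epsilon\) inequality exactly when the other does.

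First, fix \(\mathbf{S}\in\mathcal{T}_\mathbf{t}\). The physical GC count is \(Y=\sum_{j=1}^n X_j\) with independent \(X_j\sim\mathrm{Bernoulli}(q(S_j))\). By Definition~\ref{def:binning}, \(q(S_j)=i/k\) exactly when \(S_j\in\mathcal{B}_i\), so the only relevant feature of each symbol is its bin index. Partition the positions into the index sets \(J_i=\{j: S_j\in\mathcal{B}_i\}\), where \(|J_i|=t_i\), and regroup the sum as \(Y=\sum_{i=0}^k Y_i\) with \(Y_i=\sum_{j\in J_i}X_j\).

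Next, each \(Y_i\) is a sum of \(t_i\) i.i.d.\ Bernoulli\((i/k)\) variables, so \(Y_i\sim\bin(t_i,i/k)\). The \(Y_i\) are independent because the \(J_i\) are disjoint and the \(X_j\) are independent. This is exactly the setting of Proposition~\ref{prop:exact_pmf}, which gives \(f_Y=f_0*\cdots*f_k\). Each \(f_i\) is determined by \((t_i,i/k)\) alone, so \(f_Y\) is a function of \(\mathbf{t}\) only and is identical for \(\mathbf{S}_1\) and \(\mathbf{S}_2\).

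Finally, \(n\), \(\alpha\), \(\beta\) and \(\epsilon\) are common to both sequences. The quantity \(\sum_{y<n\alpha}f_Y(y)+\sum_{y>n\beta}f_Y(y)\) is therefore the same for both, which proves the equivalence.

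There is no real obstacle here. The only point needing care is justifying the regrouping: the sum of \(X_j\) over a bin is binomial, and the sums over different bins are independent. Both follow directly from the independence of synthesis across positions stated in the model.
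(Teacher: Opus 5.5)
Your proposal is correct and follows the paper's own argument. Both use Proposition~\ref{prop:exact_pmf} to show that the PMF of $Y$, and hence the tail violation probability, depends only on the type $\mathbf{t}$; you additionally spell out the regrouping into independent binomials $Y_i \sim \bin(t_i, i/k)$, which the paper leaves implicit.
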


Because the violation probability is identical for all sequences within a given type-class, we can define the set of all probabilistically valid types. Let \(Y_\mathbf{t}\) denote the Poisson binomial random variable associated with type \(\mathbf{t}\).

\begin{definition}[Allowed Type Set]
The set of all sequence types that satisfy the global \(\epsilon\)-probabilistic GC constraint is denoted as:
\vspace{-0.4cm}
\begin{equation}
    \mathcal{P}_{\epsilon} = \left\{ \mathbf{t} : \Pr\left(Y_\mathbf{t} < n\alpha\right) + \Pr\left(Y_\mathbf{t} > n\beta\right) \le \epsilon \right\}.
\end{equation}
\end{definition}
\vspace{-0.2cm}
Because probabilistic compliance depends only on a symbol's GC probability \(q(S)\) rather than the specific nucleotide mixture of the symbol itself, the \(\epsilon\)-probabilistic GC constraint can be equivalently defined and evaluated over sequences of probability bins (e.g., \( \mathcal{B}_{i_1}, \mathcal{B}_{i_2}, \dots\) where \(\mathcal{B}_{i_j} \in \{\mathcal{B}_0, \dots, \mathcal{B}_k\}\)). 

\begin{proposition}[Type-Class Cardinality]\label{prop:class_cardinality}

For a given sequence type \(\mathbf{t} = (t_0, \dots, t_k)\), the total number of composite sequences in the type-class \(\mathcal{T}_\mathbf{t}\) is given by:
\vspace{-0.2cm}
\begin{equation}
    |\mathcal{T}_\mathbf{t}| = \frac{n!}{\prod_{i=0}^k t_i!} \prod_{i=0}^k |\mathcal{B}_i|^{t_i}.
\end{equation}
\end{proposition}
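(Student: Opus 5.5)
The plan is a two-stage counting argument: first choose which positions go to which probability bin, then choose a concrete composite symbol at each position within its assigned bin. Concretely, I will set up a map $\Phi$ from $\mathcal{T}_\mathbf{t}$ to the set of bin-label sequences $\mathbf{b} = (b_1,\dots,b_n) \in \{0,\dots,k\}^n$, sending $\mathbf{S}$ to the sequence of bin indices $b_j$ defined by $S_j \in \mathcal{B}_{b_j}$. Since the bins $\mathcal{B}_0,\dots,\mathcal{B}_k$ partition $\mathcal{A}_k$ (Definition~\ref{def:binning}), the map is well defined. By Definition~\ref{def:type}, its image is exactly the set of label sequences in which label $i$ appears $t_i$ times. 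Hence $|\mathcal{T}_\mathbf{t}| = \sum_{\mathbf{b}} |\Phi^{-1}(\mathbf{b})|$, where the sum runs over those label sequences.

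\textbf{Step 1.} Count the admissible label sequences. These are arrangements of a multiset with $t_i$ copies of $i$ for each $i$, so there are $\binom{n}{t_0,\dots,t_k} = n!/\prod_i t_i!$ of them. This is the standard multinomial count, justified by ordering the $n$ positions and dividing by the $t_i!$ permutations among positions sharing a label.

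\textbf{Step 2.} Count each fiber. For a fixed label sequence $\mathbf{b}$, a sequence $\mathbf{S}$ lies in $\Phi^{-1}(\mathbf{b})$ if and only if $S_j \in \mathcal{B}_{b_j}$ for every $j$. These choices are independent across positions, with no further constraint, so the fiber is the Cartesian product $\prod_j \mathcal{B}_{b_j}$. Its size is $\prod_j |\mathcal{B}_{b_j}| = \prod_i |\mathcal{B}_i|^{t_i}$. This value depends only on $\mathbf{t}$, not on $\mathbf{b}$.

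\textbf{Step 3.} Multiply the number of label sequences by the common fiber size, which gives the stated formula. If desired, one can substitute $|\mathcal{B}_i| = (i+1)(k-i+1)$, which follows from counting $(n_G,n_C)$ with $n_G+n_C=i$ and $(n_A,n_T)$ with $n_A+n_T=k-i$.

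There is no real obstacle here. The only point needing care is that the fibers are disjoint and that their sizes are uniform, both of which follow directly from the bins forming a partition. I will also note the degenerate conventions: when $t_i=0$ the factor $|\mathcal{B}_i|^0 = 1$ and $0! = 1$, so these cases are covered automatically.
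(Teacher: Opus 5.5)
Your proof is correct and follows the paper's own argument: the multinomial coefficient counts the assignments of positions to bins, and $\prod_i |\mathcal{B}_i|^{t_i}$ counts the choices of a symbol within each assigned bin. Your fiber map $\Phi$ only makes explicit the disjointness and the uniform fiber size that the paper's two-step count uses without stating them.
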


\begin{definition}
\label{def:global_GC_capacity}
Let \(\cond\) be a condition on vectors over \(\mathcal{A}_k^n\) and let \(\mathcal{V}_\cond \subseteq \mathcal{A}_k^n\) be the set of all vectors fulfilling that condition. Then we define the capacity for codes constrained by \(\cond\) as
\begin{align}
C_\cond =
\lim_{n \rightarrow \infty}
\frac{\log_2 \left| \mathcal{V}_\cond \right|}
{n}.
\end{align}
\end{definition}

\subsection{Global GC Capacity}
\label{sec:global_capacity}

\begin{theorem}[Global Probabilistic GC Capacity]
\label{prop:global_capacity}
Given a resolution parameter \(k\) and an error tolerance \(\epsilon\), let
\(\mathbf{p}_\mathbf{t} = (t_0/n, \dots, t_k/n)\) denote the empirical bin distribution of type \(\mathbf{t}\) with entropy \(H(\mathbf{p}_\mathbf{t}) = -\sum_{i=0}^k \frac{t_i}{n} \log_2 \frac{t_i}{n}\), where \(0 \log_2 0 = 0\), and let
\vspace{-0.4cm}
\begin{align*}
F(\mathbf{t}) = H(\mathbf{p}_\mathbf{t}) + \sum_{i=0}^k \frac{t_i}{n} \log_2 |\mathcal{B}_i|.
\end{align*}
Then the global \(\epsilon\)-probabilistic GC constrained capacity is
\begin{align}
    C_{\text{global-GC}}(\epsilon) 
    &= \lim_{n \rightarrow \infty} \frac{1}{n} \log_2 \left( \sum_{\mathbf{t} \in \mathcal{P}_{\epsilon}} \frac{n!}{\prod_{i=0}^k t_i!} \prod_{i=0}^k |\mathcal{B}_i|^{t_i} \right) \nonumber
    \\
    &= \lim_{n \rightarrow \infty} \max_{\mathbf{t} \in \mathcal{P}_{\epsilon}} F(\mathbf{t}),
\end{align}
where \(\mathcal{P}_{\epsilon}\) is the allowed type set for length \(n\), provided that \(\mathcal{P}_{\epsilon} \neq \emptyset\) and the limit exists.
\end{theorem}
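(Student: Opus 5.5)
The plan is the standard method of types. First, the set of valid sequences $\mathcal{V}_\cond$ for the global $\epsilon$-probabilistic GC constraint is the disjoint union of the type-classes $\mathcal{T}_\mathbf{t}$ with $\mathbf{t}\in\mathcal{P}_\epsilon$. This holds by Proposition~\ref{prop:type_compliance}: compliance depends only on the type, so every class lies either entirely inside or entirely outside the valid set. Proposition~\ref{prop:class_cardinality} then gives
\[
|\mathcal{V}_\cond| = \sum_{\mathbf{t}\in\mathcal{P}_\epsilon} \frac{n!}{\prod_{i=0}^k t_i!}\prod_{i=0}^k |\mathcal{B}_i|^{t_i}.
\]
Inserting this into Definition~\ref{def:global_GC_capacity} yields the first equality immediately, assuming the limit exists.

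For the second equality, I would sandwich the sum between its largest term and (number of types)$\times$(largest term). There are at most $(n+1)^{k+1}$ types, so
\[
\max_{\mathbf{t}\in\mathcal{P}_\epsilon}|\mathcal{T}_\mathbf{t}| \le |\mathcal{V}_\cond| \le (n+1)^{k+1}\max_{\mathbf{t}\in\mathcal{P}_\epsilon}|\mathcal{T}_\mathbf{t}|.
\]
Because $k$ is fixed, the polynomial factor contributes $(k+1)\log_2(n+1)/n\to 0$ after taking $\frac1n\log_2$. It therefore remains to relate $\frac1n\log_2|\mathcal{T}_\mathbf{t}|$ to $F(\mathbf{t})$, and to do so uniformly in $\mathbf{t}$. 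The product term gives exactly $\sum_i \frac{t_i}{n}\log_2|\mathcal{B}_i|$. For the multinomial coefficient, I will use the standard bound (Cover--Thomas, Thm.~11.1.3)
\[
\frac{1}{(n+1)^{k+1}}2^{nH(\mathbf{p}_\mathbf{t})}\le \frac{n!}{\prod_i t_i!}\le 2^{nH(\mathbf{p}_\mathbf{t})}.
\]
Combining these gives, for every $\mathbf{t}$,
\[
\Bigl|\tfrac1n\log_2|\mathcal{T}_\mathbf{t}|-F(\mathbf{t})\Bigr|\le \frac{(k+1)\log_2(n+1)}{n}.
\]

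Taking the maximum over $\mathcal{P}_\epsilon$ (nonempty by hypothesis) and combining with the sandwich, I obtain
\[
\Bigl|\tfrac1n\log_2|\mathcal{V}_\cond|-\max_{\mathbf{t}\in\mathcal{P}_\epsilon}F(\mathbf{t})\Bigr| \le \frac{2(k+1)\log_2(n+1)}{n}\to 0.
\]
Hence the two sequences have the same limit whenever either one exists, which is the claimed second equality.

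The main obstacle is keeping the error terms uniform over all types, including those with some $t_i=0$. The multinomial type-class bound handles this cleanly, with the convention $0\log_2 0=0$, so the uniform estimate needs no case distinction. A secondary subtlety is that $\mathcal{P}_\epsilon$ depends on $n$. For that reason I will not try to identify the limit as a single continuous optimization problem; I will only show that the two expressions are asymptotically equal, which is exactly what the statement asserts, given that the limit exists.
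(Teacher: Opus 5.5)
Your proposal is correct and follows essentially the same route as the paper's proof. Both decompose the valid set into the disjoint union of allowed type-classes, apply Cover--Thomas Theorem 11.1.3 to the multinomial coefficient, and sandwich the sum using the polynomial bound $(n+1)^{k+1}$ on the number of types. Your combined error term $2(k+1)\log_2(n+1)/n$ is slightly looser than the paper's, because $|\mathcal{T}_\mathbf{t}| \le 2^{nF(\mathbf{t})}$ loses nothing on the upper side, but this does not affect the limit.
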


\begin{proof}[Proof sketch]
The set \(\mathcal{V}_{\text{global-GC}}\) of valid sequences is the disjoint union of the type-classes \(\mathcal{T}_\mathbf{t}\), \(\mathbf{t} \in \mathcal{P}_\epsilon\). Each has size \(2^{nF(\mathbf{t})}\) up to a polynomial factor, and there are polynomially many types~\cite{cover_thomas_types}.
\end{proof}

Note that \(F(\mathbf{t})\) is the entropy of the distribution over \(\mathcal{A}_k\) that assigns mass \(t_i/n\) to bin \(\mathcal{B}_i\) and is uniform within each bin. Hence \(F(\mathbf{t}) \leq \log_2 |\mathcal{A}_k|\), with equality if and only if \(t_i/n = |\mathcal{B}_i|/|\mathcal{A}_k|\) for all \(i\).

\begin{remark}[Asymptotic Independence of \(\epsilon\)]
\label{rem:eps_asymptotic}
Let \(\mu(\mathbf{p}) = \sum_{i=0}^k p_i \frac{i}{k}\) denote the expected GC fraction of a sequence with bin distribution \(\mathbf{p}\). Using Chebyshev's inequality (see Appendix~\ref{app:proofs}), for \(\alpha < \beta\) and any fixed \(\epsilon \in (0,1)\), the capacity does not depend on \(\epsilon\):
\vspace{-0.3cm}
\begin{align*}
C_{\text{global-GC}}(\epsilon) = \max_{\mathbf{p} : \, \alpha \leq \mu(\mathbf{p}) \leq \beta} \left( H(\mathbf{p}) + \sum_{i=0}^k p_i \log_2 |\mathcal{B}_i| \right).
\end{align*}
In particular, since \(|\mathcal{B}_i| = |\mathcal{B}_{k-i}|\), the unconstrained maximizer \(p_i = |\mathcal{B}_i|/|\mathcal{A}_k|\) satisfies \(\mu(\mathbf{p}) = \frac{1}{2}\), and thus \(C_{\text{global-GC}}(\epsilon) = \log_2 |\mathcal{A}_k|\) whenever \(\alpha \leq \frac{1}{2} \leq \beta\). This does not hold for the deterministic case \(\epsilon = 0\), see Corollary~\ref{cor:det_capacity} in Appendix~\ref{app:deterministic}.
\end{remark}

\subsection{Sliding-Window GC Capacity}
\label{sec:window_capacity}

When the GC-content constraint is enforced locally over a sliding window of length \(w<n\), sequence validity depends on overlapping windows. Under the \(\epsilon\)-probabilistic constraint, the probability distribution of the GC count depends on the probability bins of the symbols within each window. We therefore model the constrained sequence as a finite-state Markov chain, where each state stores the recent history needed to determine whether the next window is admissible. An example is given in Appendix~\ref{app:examples}.

\begin{theorem}[Probabilistic Sliding-Window GC Capacity]
\label{thm:window_prob_capacity}
Let \(\mathcal{G}(V,E)\) be a directed graph whose vertices \(V\) are the set of all \((w-1)\)-tuples of probability-mass bins \(\mathcal{B}_i\), \(i\in\{0,\ldots,k\}\). 
An edge from a vertex \(s=(b_1,\ldots,b_{w-1})\) to a vertex \(s'=(b'_1,\ldots,b'_{w-1})\) exists if the following two conditions hold: 
(i) \((b_2,\ldots,b_{w-1})=(b'_1,\ldots,b'_{w-2})\), and
(ii) the \(w\)-tuple \((b_1,\ldots,b_{w-1},b'_{w-1})\) satisfies the \(\epsilon\)-probabilistic GC constraint. Each valid edge corresponding to the appended bin \(\mathcal{B}_{j}\) is assigned weight \(|\mathcal{B}_{j}|\). 
Let \(T_\epsilon\) denote the resulting weighted transition matrix and let \(v^{(w)}_\epsilon(s)\) count the valid length-\(w\) composite sequences whose last \(w-1\) bins equal \(s\), i.e., \(v^{(w)}_\epsilon = u T_\epsilon\) with \(u(s) = \prod_{j=1}^{w-1} |\mathcal{B}_{b_j}|\). 
Then, for \(n \geq w\), the number of elements in the set \(\mathcal{V}_{\mathrm{w-GC}}\) of composite sequences fulfilling the \(w\)-windowed \(\epsilon\)-probabilistic GC constraint is
\begin{equation}
    |\mathcal{V}_{\mathrm{w-GC}}|
    =
    v^{(w)}_\epsilon T_\epsilon^{\,n-w}\mathbf{1}^{T},
\end{equation}
and, if the largest absolute eigenvalue of \(T_\epsilon\) is positive, i.e., \(\lambda(T_\epsilon) > 0\), the corresponding capacity is
\begin{equation}
    C_{\mathrm{w-GC}}(\epsilon)
    =
    \lim_{n \rightarrow \infty}
    \frac{1}{n} \log_2 |\mathcal{V}_{\mathrm{w-GC}}|
    =
    \log_2 \lambda(T_\epsilon).
\end{equation}
\end{theorem}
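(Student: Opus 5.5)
The plan is to prove the counting formula first and then read off the capacity from Perron--Frobenius. The counting rests on the observation after Definition~\ref{def:prob_gc} and Proposition~\ref{prop:type_compliance}. Whether a window complies depends only on the bin sequence $(b_i,\ldots,b_{i+w-1})$, because the Poisson binomial law of the window's GC count depends only on the bins. So we work with the map $\mathbf{S}\mapsto(b_1,\ldots,b_n)$ that sends each symbol to its bin. This map partitions $\mathcal{A}_k^n$, and the fibre over a bin sequence has exactly $\prod_{j=1}^n|\mathcal{B}_{b_j}|$ elements. Hence $|\mathcal{V}_{\mathrm{w-GC}}|$ is the sum of $\prod_j|\mathcal{B}_{b_j}|$ over all bin sequences in which every length-$w$ window is admissible.

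Next we show that such bin sequences are in bijection with walks of length $n-w+1$ in $\mathcal{G}$. The walk starts at $s_0=(b_1,\ldots,b_{w-1})$ and passes through $s_t=(b_{t+1},\ldots,b_{t+w-1})$ for $t=1,\ldots,n-w+1$. Condition (i) says consecutive states overlap by shifting, and condition (ii) says that the $w$-tuple formed by $s_{t-1}$ and the new last bin, i.e.\ window $t$, is admissible. This gives exactly the $n-w+1$ windows. Multiplying edge weights along the walk gives $\prod_{j\ge w}|\mathcal{B}_{b_j}|$, and the initial vector $u$ supplies the factor for the first $w-1$ bins. So the weighted count is $u\,T_\epsilon^{\,n-w+1}\mathbf{1}^T = v^{(w)}_\epsilon T_\epsilon^{\,n-w}\mathbf{1}^T$. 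A short induction on $n$ checks that $v^{(w)}_\epsilon T_\epsilon^{\,n-w}(s)$ counts valid length-$n$ composite sequences ending in state $s$. The index bookkeeping (whether the first window is consumed by $u T_\epsilon$) is the one place to be careful.

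For the capacity, note that $T_\epsilon$ is nonnegative, $u$ has strictly positive entries (all $|\mathcal{B}_i|\ge1$), and $\mathbf{1}$ is positive. Hence $u T^{m}\mathbf{1}^T$ is bounded above by $c_1\|T^m\|$ and below by $c_2\|T^m\|$ for constants independent of $m$, using any entrywise norm. Gelfand's formula $\|T^m\|^{1/m}\to\lambda(T_\epsilon)$ then gives $\frac1n\log_2|\mathcal{V}_{\mathrm{w-GC}}|\to\log_2\lambda(T_\epsilon)$. Here $m=n-w+1$, and the shift by the constant $w$ is negligible after dividing by $n$.

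The main obstacle is making the lower bound rigorous when $T_\epsilon$ is reducible, since then Perron--Frobenius for irreducible matrices does not apply directly. I would avoid this by using the entrywise norm $\|T^m\|_{1}=\sum_{s,s'}(T^m)_{s,s'}$. Positivity of $u$ and $\mathbf{1}$ gives $\min_s u(s)\,\|T^m\|_1\le uT^m\mathbf{1}^T\le \max_s u(s)\,\|T^m\|_1$, and Gelfand's formula holds for any matrix norm. This makes the hypothesis $\lambda(T_\epsilon)>0$ exactly what is needed: it ensures the logarithm is finite and that the count is nonzero for all large $n$.
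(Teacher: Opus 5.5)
Your proposal is correct and takes essentially the same route as the paper. You identify valid sequences with weighted walks on bin-tuples to get $u T_\epsilon^{\,n-w+1}\mathbf{1}^T = v^{(w)}_\epsilon T_\epsilon^{\,n-w}\mathbf{1}^T$, sandwich this between constant multiples of a norm of $T_\epsilon^{\,m}$ using positive bounds on $u(s)$, and conclude with Gelfand's formula, so Perron--Frobenius is never actually needed. The only cosmetic difference is that you apply Gelfand directly to the entrywise norm $\sum_{s,s'}(T_\epsilon^{\,m})_{s,s'}$ (the paper's $\sigma_m$), whereas the paper first compares $\sigma_m$ with the max-row-sum norm $\|T_\epsilon^{\,m}\|_\infty$.
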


\begin{proof}[Proof sketch]
Valid sequences correspond to weighted walks in \(\mathcal{G}\), and the growth rate of \(T_\epsilon^{\,n-w}\) is \(\lambda(T_\epsilon)\) by Gelfand's formula.
\end{proof}

\begin{remark}[Role of \(\epsilon\) for Sliding Windows]
\label{rem:eps_window}
For \(k=4\), \(\alpha=0.4\), and \(\beta=0.6\), evaluating \(\lambda(T_\epsilon)\) gives normalized capacities \(C_{\mathrm{w-GC}}(\epsilon)/\log_2|\mathcal{A}_4| \approx 0.629, 0.629, 0.770\) for \(w=5\) and \(\epsilon = 0, 0.05, 0.2\), and \( C_{\mathrm{w-GC}}(\epsilon)/\log_2|\mathcal{A}_4| \approx 0.674, 0.678, 0.839\) for \(w=10\). In contrast to the global constraint (Remark~\ref{rem:eps_asymptotic}), \(\epsilon\) determines the capacity, since the window length \(w\) does not grow with $n$. For small \(\epsilon\), it coincides with the deterministic capacity, see Appendix~\ref{app:deterministic}.
\end{remark}

\subsection{Analysis of Results}

\begin{figure}[htpb]
    \centering
    \begin{tikzpicture}
    \begin{axis}[
        width=0.95\columnwidth,
        height=0.72\columnwidth,
        xlabel={Strand Length, \(n\)},
        ylabel={Finite-Length Rate (bits/nt)},
        xmin=10, xmax=100,
        grid=major,
        grid style={gray!30},
        legend style={
            at={(0.5,-0.22)}, % Konum 2 satır için yukarı çekildi
            anchor=north,
            draw=none,
            font=\footnotesize,
            /tikz/every even column/.append style={
                column sep=0.15cm % Sütunlar arası mesafe daraltıldı
            }
        },
        legend columns=4, % Tekrar 4 sütuna dönüldü
        tick align=outside,
    ]

    % =========================================================
    % UNCONSTRAINED LIMIT
    % =========================================================

    \addplot[
        black,
        dash dot,
        line width=1.1pt,
        forget plot
    ]
    coordinates {
        (10,5.1293) (100,5.1293)
    };

    % =========================================================
    % EPSILON = 0
    % =========================================================

    % Global
    \addplot[
        epszero,
        solid,
        line width=1pt,
        mark=*,
        mark size=2pt,
        forget plot
    ]
    table[
        x index=0,
        y=Global_eps_0,
        col sep=comma
    ] {figures/final_plot_data.csv};

    % Sliding window w = 5
    \addplot[
        epszero,
        densely dotted,
        line width=1pt,
        mark=*,
        mark size=2pt,
        forget plot
    ]
    table[
        x index=0,
        y=SW_w5_eps_0,
        col sep=comma
    ] {figures/final_plot_data.csv};

    % Sliding window w = 10
    \addplot[
        epszero,
        densely dashed,
        line width=1pt,
        mark=*,
        mark size=2pt,
        forget plot
    ]
    table[
        x index=0,
        y=SW_w10_eps_0,
        col sep=comma
    ] {figures/final_plot_data.csv};

    % =========================================================
    % EPSILON = 0.05
    % =========================================================

    % Global
    \addplot[
        epsfive,
        solid,
        line width=1pt,
        mark=square*,
        mark size=2pt,
        forget plot
    ]
    table[
        x index=0,
        y=Global_eps_0p05,
        col sep=comma
    ] {figures/final_plot_data.csv};

    % Sliding window w = 5
    \addplot[
        epsfive,
        densely dotted,
        line width=1pt,
        mark=square*,
        mark size=2pt,
        forget plot
    ]
    table[
        x index=0,
        y=SW_w5_eps_0p05,
        col sep=comma
    ] {figures/final_plot_data.csv};

    % Sliding window w = 10
    \addplot[
        epsfive,
        densely dashed,
        line width=1pt,
        mark=square*,
        mark size=2pt,
        forget plot
    ]
    table[
        x index=0,
        y=SW_w10_eps_0p05,
        col sep=comma
    ] {figures/final_plot_data.csv};

    % =========================================================
    % EPSILON = 0.20
    % =========================================================

    % Global
    \addplot[
        epstwenty,
        solid,
        line width=1pt,
        mark=diamond*,
        mark size=2pt,
        forget plot
    ]
    table[
        x index=0,
        y=Global_eps_0p2,
        col sep=comma
    ] {figures/final_plot_data.csv};

    % Sliding window w = 5
    \addplot[
        epstwenty,
        densely dotted,
        line width=1pt,
        mark=diamond*,
        mark size=2pt,
        forget plot
    ]
    table[
        x index=0,
        y=SW_w5_eps_0p2,
        col sep=comma
    ] {figures/final_plot_data.csv};

    % Sliding window w = 10
    \addplot[
        epstwenty,
        densely dashed,
        line width=1pt,
        mark=diamond*,
        mark size=2pt,
        forget plot
    ]
    table[
        x index=0,
        y=SW_w10_eps_0p2,
        col sep=comma
    ] {figures/final_plot_data.csv};

    % =========================================================
    % LEGEND ENTRIES (4 Columns)
    % =========================================================

    \addlegendimage{black, solid, line width=1pt}
    \addlegendentry{Global}

    \addlegendimage{black, densely dotted, line width=1pt}
    \addlegendentry{SW (\(w=5\))}

    \addlegendimage{black, densely dashed, line width=1pt}
    \addlegendentry{SW (\(w=10\))}

    \addlegendimage{black, dash dot, line width=1pt}
    \addlegendentry{Unc.}

    \addlegendimage{epszero, solid, mark=*, mark size=2pt, line width=1pt}
    \addlegendentry{\(\epsilon=0\)}

    \addlegendimage{epsfive, solid, mark=square*, mark size=2pt, line width=1pt}
    \addlegendentry{\(\epsilon=0.05\)}

    \addlegendimage{epstwenty, solid, mark=diamond*, mark size=2pt, line width=1pt}
    \addlegendentry{\(\epsilon=0.20\)}

    \end{axis}
\end{tikzpicture}
    \caption{
    Global and sliding-window finite-length rates as a function of strand
    length \(n\), for \(k=4\), \(\alpha=0.40\), and \(\beta=0.60\). The rates are
    shown for error tolerances \(\epsilon\in\{0,0.05,0.20\}\) and sliding-window
    lengths \(w\in\{5,10\}\). (SW: Sliding Window, Unc.: Unconstrained Limit)}
    \label{fig:capacity_analysis}
\end{figure}

Figure~\ref{fig:capacity_analysis} evaluates the global and sliding-window finite-length rates as a function of strand length for a resolution parameter of $k = 4$. The global rate increases with strand length and converges toward the unconstrained capacity if $\epsilon > 0$. 
Conversely, the sliding-window rate strictly decreases as strand length grows since each newly appended symbol must simultaneously satisfy the GC bounds across multiple shifting, overlapping windows.

\section{Multi-Type Arithmetic Indexing for the Probabilistic GC-Constraint}\label{sec:coding_schemes}

To achieve the theoretical capacity limits derived in Section~\ref{sec:global_capacity} for the global \(\epsilon\)-probabilistic GC constraint, we leverage the global composition types to implement a capacity-achieving multi-type arithmetic encoder.

Our approach adapts the arithmetic coding architecture used in constant-composition~\cite{CCD} and multi-composition~\cite{MCDM} distribution matching. Our objective, however, is not to shape a distribution, but to use \textit{all} sequences whose composition types belong to the probabilistically valid set \(\mathcal{P}_\epsilon\). We therefore utilize the multi-composition arithmetic coding framework as a capacity-achieving enumerative indexer.

To formalize this for the probabilistic GC constraint, we rely on standard interval-partitioning machinery~\cite{CCD, MCDM}. A binary input data sequence \(\mathbf{b}\) of length \(m = \lfloor \log_2 |\mathcal{V}_{\text{global-GC}}| \rfloor\) is converted into a target fraction \(0\leq d < 1\) via its \ac{nbc} representation, \(d = \operatorname{NBC}(\mathbf{b}) / 2^m\). The arithmetic encoder recursively partitions this interval such that every valid composite sequence \(\mathbf{S} \in \mathcal{V}_{\text{global-GC}}\) corresponds to a disjoint subinterval \(I(\mathbf{S})\) of strictly equal length \(1/|\mathcal{V}_{\text{global-GC}}|\). This equal-length partitioning guarantees that the maximum number of messages can be uniquely mapped and recovered~\cite{CCD, MCDM}.

Applying this continuous interval-partitioning to our discrete constrained channel requires two problem-specific ingredients:
(i) a defined lexicographic ordering over our composite alphabet \(\mathcal{A}_k\) to consistently assign interval boundaries, and (ii) a method to dynamically compute the conditional branching probability \(P(a|\Tilde{\mathbf{S}})\) that a valid sequence continues with symbol \(a \in \mathcal{A}_k\), given an existing prefix \(\Tilde{\mathbf{S}}\).

To maintain equal-length final intervals, the branching probability must exactly reflect the fraction of valid codewords that extend the current prefix. Specifically, let \(N(\Tilde{\mathbf{S}})\) represent the number of complete valid codewords that share the prefix \(\Tilde{\mathbf{S}}\). The conditional probability of appending symbol \(a\) is given by the ratio \(P(a|\Tilde{\mathbf{S}}) = N(\Tilde{\mathbf{S}}a)/N(\Tilde{\mathbf{S}})\)~\cite{MCDM}. To compute this ratio dynamically without storing the massive multi-type codebook, we introduce the unified prefix-completion count.

\begin{proposition}[Unified Prefix-Completion Count]
\label{prop:prefix_count}
Let \(\Tilde{\mathbf{S}}\) be a prefix of length \(\ell\). For any allowed type \(\mathbf{t} = (t_0, \dots, t_k) \in \mathcal{P}_\epsilon\), let \(c_i(\Tilde{\mathbf{S}})\) denote the number of symbols currently in \(\Tilde{\mathbf{S}}\) that belong to probability bin \(\mathcal{B}_i\). The remaining number of symbols required from bin \(\mathcal{B}_i\) to complete type \(\mathbf{t}\) is \(t_i(\Tilde{\mathbf{S}}) = t_i - c_i(\Tilde{\mathbf{S}})\). The total number of valid length-\(n\) sequences extending the prefix \(\Tilde{\mathbf{S}}\) is:
\begin{equation}
    N(\Tilde{\mathbf{S}}) = \sum_{\mathbf{t} \in \mathcal{P}_{\epsilon}} N_\mathbf{t}(\Tilde{\mathbf{S}}),
\end{equation}
where the per-type continuation count \(N_\mathbf{t}(\Tilde{\mathbf{S}})\) is given by the multinomial extension:
\begin{equation}
    N_\mathbf{t}(\Tilde{\mathbf{S}}) \! = \!\!
    \begin{cases} 
        \frac{(n-\ell)!}{\prod_{i=0}^{k} t_i(\Tilde{\mathbf{S}})!} \prod\limits_{i=0}^{k} |\mathcal{B}_i|^{t_i(\Tilde{\mathbf{S}})}, 
        & \!\!\!\! \text{if } t_i(\Tilde{\mathbf{S}}) \! \ge \! 0 \text{ for all } i, \!\!
        \\
        0, & \!\!\!\! \text{otherwise}.
    \end{cases}
\end{equation}
\end{proposition}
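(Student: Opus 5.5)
The plan is to partition the set of valid completions of \(\Tilde{\mathbf{S}}\) by their type and then count each piece with the same multinomial argument used for Proposition~\ref{prop:class_cardinality}, applied to the \(n-\ell\) free positions.

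\emph{Step 1 (partition).} Let \(\mathcal{E}(\Tilde{\mathbf{S}})\) be the set of sequences \(\mathbf{S}\in\mathcal{V}_{\text{global-GC}}\) whose first \(\ell\) symbols equal \(\Tilde{\mathbf{S}}\), so that \(N(\Tilde{\mathbf{S}})=|\mathcal{E}(\Tilde{\mathbf{S}})|\). By Proposition~\ref{prop:type_compliance} and the definition of \(\mathcal{P}_\epsilon\), a sequence is valid if and only if its type lies in \(\mathcal{P}_\epsilon\). Hence
\[
\mathcal{V}_{\text{global-GC}}=\bigsqcup_{\mathbf{t}\in\mathcal{P}_\epsilon}\mathcal{T}_\mathbf{t},
\]
and intersecting with the prefix condition gives a disjoint union \(\mathcal{E}(\Tilde{\mathbf{S}})=\bigsqcup_{\mathbf{t}\in\mathcal{P}_\epsilon}\bigl(\mathcal{E}(\Tilde{\mathbf{S}})\cap\mathcal{T}_\mathbf{t}\bigr)\). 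Setting \(N_\mathbf{t}(\Tilde{\mathbf{S}})=|\mathcal{E}(\Tilde{\mathbf{S}})\cap\mathcal{T}_\mathbf{t}|\), the sum formula follows. Disjointness holds because each sequence has exactly one type.

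\emph{Step 2 (per-type count).} Fix \(\mathbf{t}\) and write any length-\(n\) sequence with prefix \(\Tilde{\mathbf{S}}\) as \(\Tilde{\mathbf{S}}\mathbf{R}\) with \(\mathbf{R}\in\mathcal{A}_k^{n-\ell}\). Type counts are additive over concatenation, so \(t_i(\Tilde{\mathbf{S}}\mathbf{R})=c_i(\Tilde{\mathbf{S}})+t_i(\mathbf{R})\). Therefore \(\Tilde{\mathbf{S}}\mathbf{R}\in\mathcal{T}_\mathbf{t}\) if and only if \(\mathbf{R}\) has type \((t_0-c_0(\Tilde{\mathbf{S}}),\dots,t_k-c_k(\Tilde{\mathbf{S}}))\).

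\emph{Step 3 (case split).} If some \(t_i-c_i(\Tilde{\mathbf{S}})<0\), no \(\mathbf{R}\) has that type and the count is \(0\). Otherwise the remaining counts are nonnegative. They also sum to \(n-\ell\), since \(\sum_i t_i=n\) and \(\sum_i c_i(\Tilde{\mathbf{S}})=\ell\), so this is a legitimate type for length \(n-\ell\). Proposition~\ref{prop:class_cardinality} with \(n\) replaced by \(n-\ell\) then gives the stated multinomial expression. That proposition's reasoning is: choose the positions for each bin, which gives the multinomial coefficient, then choose a symbol within each bin independently, which gives \(\prod_i|\mathcal{B}_i|^{t_i(\Tilde{\mathbf{S}})}\).

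\emph{Main point of care.} The only genuinely substantive step is Step 1: validity must depend only on the full type, not on the prefix or the ordering. This is exactly Proposition~\ref{prop:type_compliance}. Otherwise the sum could not be restricted to types in \(\mathcal{P}_\epsilon\) without double counting or omissions. Everything else is bookkeeping, including the edge cases \(\ell=0\), which recovers \(|\mathcal{V}_{\text{global-GC}}|\), and \(\ell=n\), which gives \(N=1\) or \(0\).
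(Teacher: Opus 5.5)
Your proposal is correct and follows essentially the same route as the paper: it decomposes the valid completions disjointly over the allowed type-classes, then counts each piece by applying the argument of Proposition~\ref{prop:class_cardinality} to the remaining $n-\ell$ positions with residual counts $t_i - c_i(\Tilde{\mathbf{S}})$. Your version is slightly more careful, since you check that the residual counts sum to $n-\ell$ and use the correct strict condition $t_i - c_i(\Tilde{\mathbf{S}}) < 0$ for the zero case, whereas the paper's written proof states it as $t_i(\Tilde{\mathbf{S}}) \leq 0$.
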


Using the prefix-completion counts derived from Proposition~\ref{prop:prefix_count}, the encoder recursively narrows the initial interval $[0, 1)$ around the target fraction $d$. This symbol-by-symbol mapping process is formalized in Algorithm~\ref{alg:encoder}. A worked example is given in Appendix~\ref{app:examples}.

\begin{figure}[!t]
\noindent\begin{minipage}{\columnwidth}
\rule{\columnwidth}{1pt}
\refstepcounter{algorithm}\label{alg:encoder}
\textbf{Algorithm~\thealgorithm: Multi-Type Arithmetic-Indexing Encoder} \\
\rule{\columnwidth}{1pt}
\textbf{Input:} Binary message \(\mathbf{b}\), strand length \(n\), allowed types \(\mathcal{P}_{\epsilon}\). \\
\textbf{Output:} Admissible composite codeword \(\mathbf{c}\).
\begin{algorithmic}[1]
    \STATE \(d \leftarrow \operatorname{NBC}(\mathbf{b})/2^m\) \hfill \{\textit{Target fraction}\}
    \STATE \(I \leftarrow [0, 1)\) \hfill \{\textit{Current interval}\}
    \STATE \(s \leftarrow \emptyset\) \hfill \{\textit{Initialize prefix}\}
    \FOR{\(i=1\) \TO \(n\)}
        \STATE For each valid candidate \(a\), compute multi-type paths:
        \STATE \(N(sa) = \sum_{\mathbf{t} \in \mathcal{P}_{\epsilon}} N_\mathbf{t}(sa)\)
        \STATE Partition \(I\) into sub-intervals \(I_a\) proportional to \(N(sa)\).
        \STATE Select symbol \(a^*\) whose sub-interval \(I_{a^*}\) contains \(d\).
        \STATE \(I \leftarrow I_{a^*}\) \hfill \{\textit{Update interval}\}
        \STATE \(s \leftarrow sa^*\) \hfill \{\textit{Append to prefix}\}
    \ENDFOR
    \RETURN \(\mathbf{c} \leftarrow s\)
\end{algorithmic}
\rule{\columnwidth}{1pt}
\end{minipage}
\end{figure}

To decode the codeword, the receiver applies arithmetic compression on the received sequence \(\mathbf{c}\) using the same branching model \(P(a|\Tilde{\mathbf{S}}) \propto N(\Tilde{\mathbf{S}}a)\). The decoder recursively partitions the interval \(I\) symbol by symbol to reconstruct the final valid sub-interval \(I(\mathbf{S})\). Retrieving the original binary sequence \(\mathbf{b}\) is performed by extracting the unique data point \(d\) contained within \(I(\mathbf{S})\), and reversing the binary mapping via \(\operatorname{NBC}^{-1}\)~\cite{MCDM}.

To measure the practical efficiency of this encoding scheme, we calculate its normalized achieved information rate, defined as the fraction of the unconstrained composite-symbol capacity achieved by the practical encoder:
\begin{equation}
    \eta_{\mathrm{enc}} = \frac{\left\lfloor\log_2|\mathcal{V}_{\text{global-GC}}|\right\rfloor}{n\log_2|\mathcal{A}_k|}.
\end{equation}

\begin{figure}
    \centering
    \begin{tikzpicture}
    % Define the colors if they aren't already in your preamble
    \colorlet{epszero}{blue}
    \colorlet{epsfive}{orange}
    \colorlet{epstwenty}{green!70!black}

    \begin{axis}[
        width=0.95\columnwidth,
        height=0.6\columnwidth,
        xlabel={Strand Length, \(n\)},
        ylabel={Normalized Rate / Capacity},
        xmin=10, xmax=60,
        ymin=0.6, ymax=1,
        ytick={0.6,0.7,...,1},
        grid=major,
        grid style={gray!30},
        legend style={
            at={(0.98,0.04)}, % Placed inside the plot, bottom right
            anchor=south east,
            draw=none,
            fill=white,       % Adds a white background to cover grid lines
            fill opacity=0.9, % Slight transparency looks elegant
            text opacity=1,
            font=\footnotesize,
            /tikz/every even column/.append style={
                column sep=0.35cm
            }
        },
        legend columns=3,
        tick align=outside,
        axis lines*=left,     % Only left and bottom axis lines
        axis line style={-},  % No arrow tips
    ]

    % =========================================================
    % UNCONSTRAINED BASELINE
    % =========================================================
    \addplot[
        black,
        dash dot,
        line width=1.1pt,
        forget plot
    ]
    coordinates {(10,1) (60,1)};

    % =========================================================
    % THEORETICAL CAPACITY (Dashed lines, no markers)
    % =========================================================
    
    \addplot[epszero, densely dashed, line width=1pt, forget plot]
    table[x=w, y=Cap_eps_0, col sep=comma] {figures/capacity_efficiency_data.csv};

    \addplot[epsfive, densely dashed, line width=1pt, forget plot]
    table[x=w, y=Cap_eps_0p05, col sep=comma] {figures/capacity_efficiency_data.csv};

    \addplot[epstwenty, densely dashed, line width=1pt, forget plot]
    table[x=w, y=Cap_eps_0p2, col sep=comma] {figures/capacity_efficiency_data.csv};

    % =========================================================
    % MULTI-TYPE ARITHMETIC INDEXING (Solid lines, filled markers)
    % =========================================================
    
    \addplot[epszero, solid, line width=1pt, mark=*, mark size=2pt, forget plot]
    table[x=w, y=MTAI_eps_0, col sep=comma] {figures/capacity_efficiency_data.csv};

    \addplot[epsfive, solid, line width=1pt, mark=square*, mark size=2pt, forget plot]
    table[x=w, y=MTAI_eps_0p05, col sep=comma] {figures/capacity_efficiency_data.csv};

    \addplot[epstwenty, solid, line width=1pt, mark=diamond*, mark size=2pt, forget plot]
    table[x=w, y=MTAI_eps_0p2, col sep=comma] {figures/capacity_efficiency_data.csv};

    % =========================================================
    % LEGEND ENTRIES (3 Columns x 2 Rows)
    % =========================================================

    % Row 1: Line styles / Quantities
    \addlegendimage{black, densely dashed, line width=1pt}
    \addlegendentry{Capacity}

    \addlegendimage{black, solid, mark=*, mark size=2pt, line width=1pt}
    \addlegendentry{MTAI}

    \addlegendimage{black, dash dot, line width=1pt}
    \addlegendentry{Unconstrained}

    % Row 2: Colors / Epsilons
    \addlegendimage{epszero, solid, mark=*, mark size=2pt, line width=1pt}
    \addlegendentry{\(\epsilon=0\)}

    \addlegendimage{epsfive, solid, mark=square*, mark size=2pt, line width=1pt}
    \addlegendentry{\(\epsilon=0.05\)}

    \addlegendimage{epstwenty, solid, mark=diamond*, mark size=2pt, line width=1pt}
    \addlegendentry{\(\epsilon=0.20\)}

    \end{axis}
\end{tikzpicture}
    \caption{Capacity and coding efficiency comparison as a function of strand length \(n\), for \(k=4\), \(\alpha=0.40\), and \(\beta=0.60\). Rates are shown for error tolerances \(\epsilon\in\{0,0.05,0.20\}\).}
    \label{fig:coding_efficiency}
\end{figure}

As illustrated in Figure~\ref{fig:coding_efficiency}, the multi-type arithmetic indexing (MTAI) approach consistently converges to the theoretical GC-constrained channel capacity already at short strand lengths across all evaluated error tolerances \(\epsilon\).

\section{Conclusion}

This paper introduced a probabilistic framework for GC-content constraints in composite DNA storage. By grouping composite symbols into probability bins, we formulated exact capacity bounds for global constraints using composition types, and for local sliding-window constraints via finite-state transfer matrices. Furthermore, we proposed a capacity-achieving multi-type enumerative encoder that efficiently utilizes this expanded code space by directly indexing the probabilistically valid codebook. Future work will explore adapting this arithmetic indexing scheme directly for sliding-window constraints.

\section*{Acknowledgements}
The authors used the AI system Gemini (Google) to assist in improving the code for the simulations and generate the figures in this manuscript. 
All AI-generated content was checked for accuracy and revised by the authors, who take full responsibility for the content of this paper.

\bibliographystyle{IEEEtran} 
\bibliography{my_references}

% Generated by IEEEtran.bst, version: 1.14 (2015/08/26)
\begin{thebibliography}{10}
\providecommand{\url}[1]{#1}
\csname url@samestyle\endcsname
\providecommand{\newblock}{\relax}
\providecommand{\bibinfo}[2]{#2}
\providecommand{\BIBentrySTDinterwordspacing}{\spaceskip=0pt\relax}
\providecommand{\BIBentryALTinterwordstretchfactor}{4}
\providecommand{\BIBentryALTinterwordspacing}{\spaceskip=\fontdimen2\font plus
\BIBentryALTinterwordstretchfactor\fontdimen3\font minus \fontdimen4\font\relax}
\providecommand{\BIBforeignlanguage}[2]{{%
\expandafter\ifx\csname l@#1\endcsname\relax
\typeout{** WARNING: IEEEtran.bst: No hyphenation pattern has been}%
\typeout{** loaded for the language `#1'. Using the pattern for}%
\typeout{** the default language instead.}%
\else
\language=\csname l@#1\endcsname
\fi
#2}}
\providecommand{\BIBdecl}{\relax}
\BIBdecl

\bibitem{Church2012DNAStorage}
\BIBentryALTinterwordspacing
G.~M. Church, Y.~Gao, and S.~Kosuri, ``Next-generation digital information storage in dna,'' \emph{Science}, vol. 337, no. 6102, pp. 1628--1628, 2012. [Online]. Available: \url{https://www.science.org/doi/abs/10.1126/science.1226355}
\BIBentrySTDinterwordspacing

\bibitem{goldman2013}
\BIBentryALTinterwordspacing
N.~Goldman, P.~Bertone, S.~Chen, C.~Dessimoz, E.~M. LeProust, B.~Sipos, and E.~Birney, ``Towards practical, high-capacity, low-maintenance information storage in synthesized dna,'' \emph{Nature}, vol. 494, no. 7435, pp. 77--80, 2013. [Online]. Available: \url{https://doi.org/10.1038/nature11875}
\BIBentrySTDinterwordspacing

\bibitem{sabary2024reconstruction}
O.~Sabary, A.~Yucovich, G.~Shapira, and E.~Yaakobi, ``Reconstruction algorithms for dna-storage systems,'' \emph{Scientific Reports}, vol.~14, no.~1, p. 1951, 2024.

\bibitem{anavy2019}
\BIBentryALTinterwordspacing
L.~Anavy, I.~Vaknin, O.~Atar, R.~Amit, and Z.~Yakhini, ``Data storage in dna with fewer synthesis cycles using composite dna letters,'' \emph{Nature Biotechnology}, vol.~37, no.~10, pp. 1229--1236, 2019. [Online]. Available: \url{https://doi.org/10.1038/s41587-019-0240-x}
\BIBentrySTDinterwordspacing

\bibitem{Erlich2017DNAFountain}
Y.~Erlich and D.~Zielinski, ``Dna fountain enables a robust and efficient storage architecture,'' \emph{Science}, vol. 355, no. 6328, pp. 950--954, 2017.

\bibitem{Heckel2019Characterization}
R.~Heckel, G.~Mikutis, and R.~N. Grass, ``A characterization of the dna data storage channel,'' \emph{Scientific Reports}, vol.~9, no.~1, p. 9663, 2019.

\bibitem{Yazdi2016DNAStorage}
S.~H.~T. Yazdi, H.~M. Kiah, E.~Garcia-Ruiz, J.~Ma, H.~Zhao, and O.~Milenkovic, ``Dna-based storage: Trends and methods,'' \emph{IEEE Transactions on Molecular, Biological, and Multi-Scale Communications}, vol.~1, no.~3, pp. 230--248, 2016.

\bibitem{optical_constrained}
K.~A.~S. Immink, P.~H. Siegel, and J.~K. Wolf, \emph{Codes for digital recorders (invited paper)}.\hskip 1em plus 0.5em minus 0.4em\relax IEEE Press, 2000, p. 216–255.

\bibitem{nguyen}
T.~T. Nguyen, C.~Wang, K.~Cai, Y.~Zhang, and Z.~Yakhini, ``Constrained coding for composite dna: Channel capacity and efficient constructions,'' in \emph{2025 IEEE International Symposium on Information Theory (ISIT)}, 2025, pp. 1--6.

\bibitem{walter2025coding}
F.~Walter and Y.~Yehezkeally, ``Coding for strand breaks in composite dna,'' in \emph{2025 IEEE International Symposium on Information Theory (ISIT)}, 2025, pp. 1--6.

\bibitem{dang2026rll}
D.~H. Dang and S.~Kim, ``A theoretical framework for run-length-limited coding in composite dna,'' \emph{IEEE Transactions on Molecular, Biological, and Multi-Scale Communications}, 2026.

\bibitem{cover_thomas_types}
T.~M. Cover, \emph{Elements of information theory}.\hskip 1em plus 0.5em minus 0.4em\relax John Wiley \& Sons, 1999.

\bibitem{Grinstead2025Sums}
C.~M. Grinstead and J.~Laurie~Snell, ``Sums of {Discrete} {Random} {Variables},'' sep 13 2025, [Online; accessed 2026-09-15].

\bibitem{CCD}
P.~Schulte and G.~Böcherer, ``Constant composition distribution matching,'' \emph{IEEE Transactions on Information Theory}, vol.~62, no.~1, pp. 430--434, 2016.

\bibitem{MCDM}
M.~Pikus and W.~Xu, ``Arithmetic coding based multi-composition codes for bit-level distribution matching,'' in \emph{2019 IEEE Wireless Communications and Networking Conference (WCNC)}, 2019, pp. 1--6.

\end{thebibliography}

\clearpage
\appendices

\section{Proofs}\label{app:proofs}

\begin{proof}[Proof of Proposition~\ref{prop:exact_pmf}]
A type-\(\mathbf{t}\) sequence contains \(t_i\) symbols from
\(\mathcal{B}_i\). Their GC contributions are independent Bernoulli
trials with success probability \(\frac{i}{k}\), so
\(Y_i \sim \bin(t_i,\frac{i}{k})\). Since the variables
\(Y_0,\ldots,Y_k\) are mutually independent and
\(Y=\sum_{i=0}^k Y_i\), the \ac{pmf} of \(Y\) is the convolution of their
individual \acp{pmf}~\cite{Grinstead2025Sums}.
\end{proof}

\begin{proof}[Proof of Proposition~\ref{prop:type_compliance}]
As established in Proposition~\ref{prop:exact_pmf}, the exact Poisson binomial \ac{pmf} of a sequence's total physical GC count \(Y\) is determined entirely by the convolution of the binomial distributions associated with each probability bin. These binomial distributions depend exclusively on the bin counts \((t_0, \dots, t_k)\) specified by the type \(\mathbf{t}\). Since \(\mathbf{S}_1\) and \(\mathbf{S}_2\) belong to the same type-class \(\mathcal{T}_\mathbf{t}\), their total GC counts are identically distributed. Consequently, their tail violation probabilities \(\Pr\left(Y < n\alpha\right) + \Pr\left(Y > n\beta\right)\) are exactly equal.
\end{proof}

\begin{proof}[Proof of Proposition~\ref{prop:class_cardinality}]
Constructing a sequence of type \(\mathbf{t}\) is a two-step process. First, we must assign each of the \(n\) sequence positions to one of the \(k+1\) probability bins such that exactly \(t_i\) positions are assigned to bin \(\mathcal{B}_i\). The number of distinct ways to partition these positions is given by the multinomial coefficient \(\frac{n!}{\prod_{i=0}^k t_i!}\). 

Second, for each of the \(t_i\) positions assigned to bin \(\mathcal{B}_i\), we must select a specific composite symbol from that bin. Since there are \(|\mathcal{B}_i|\) distinct symbols in bin \(\mathcal{B}_i\), there are \(|\mathcal{B}_i|^{t_i}\) possible symbol combinations for those positions and the proposition follows. 
\end{proof}

\begin{proof}[Proof of Theorem~\ref{prop:global_capacity}]
Let \(\mathcal{V}_{\text{global-GC}}\) denote the set of valid sequences satisfying the global probabilistic GC constraint. 
A sequence satisfies the constraint if and only if its type belongs to \(\mathcal{P}_{\epsilon}\). Consequently, \(\mathcal{V}_{\text{global-GC}}\) forms the disjoint union of valid type-classes, \(\mathcal{V}_{\text{global-GC}} = \bigcup_{\mathbf{t} \in \mathcal{P}_{\epsilon}} \mathcal{T}_\mathbf{t}\). We obtain the total number of valid sequences \(|\mathcal{V}_{\text{global-GC}}|\) by summing their cardinalities via Proposition \ref{prop:class_cardinality}. 

From \cite[Theorem 11.1.3]{cover_thomas_types}, applied to the sequence of probability bins over an alphabet of size \(k+1\), we know that
\begin{align*}
   \frac{1}{(n+1)^{k+1}} 2^{n H(\mathbf{p}_\mathbf{t})} \leq \frac{n!}{\prod_{i=0}^k t_i!} \leq 2^{n H(\mathbf{p}_\mathbf{t})}.
\end{align*}
Since \(\prod_{i=0}^k |\mathcal{B}_i|^{t_i} = 2^{n \sum_{i} \frac{t_i}{n} \log_2 |\mathcal{B}_i|}\), Proposition~\ref{prop:class_cardinality} gives
\begin{align*}
   \frac{1}{(n+1)^{k+1}} 2^{n F(\mathbf{t})} \leq \left| \mathcal{T}_\mathbf{t} \right| \leq 2^{n F(\mathbf{t})}.
\end{align*}
Let \(F^*_n = \max_{\mathbf{t} \in \mathcal{P}_{\epsilon}} F(\mathbf{t})\). As the number of types is at most \((n+1)^{k+1}\)~\cite[Theorem 11.1.1]{cover_thomas_types}, we have
\begin{align*}
   \frac{1}{(n+1)^{k+1}} 2^{n F^*_n} \leq |\mathcal{V}_{\text{global-GC}}| \leq (n+1)^{k+1} 2^{n F^*_n}.
\end{align*}
Taking \(\frac{1}{n}\log_2(\cdot)\), both bounds differ from \(F^*_n\) by at most \(\frac{(k+1)\log_2(n+1)}{n} \rightarrow 0\) as \(n \rightarrow \infty\), which proves the claim.
\end{proof}

\begin{proof}[Proof of Remark~\ref{rem:eps_asymptotic}]
The Poisson binomial variable \(Y_\mathbf{t}\) has mean \(n\mu(\mathbf{p}_\mathbf{t})\) and variance at most \(n/4\). By Chebyshev's inequality, every type with
\begin{align*}
\alpha + \frac{1}{2\sqrt{n\epsilon}} \leq \mu(\mathbf{p}_\mathbf{t}) \leq \beta - \frac{1}{2\sqrt{n\epsilon}}
\end{align*}
belongs to \(\mathcal{P}_{\epsilon}\), while for any fixed \(\delta > 0\) and sufficiently large \(n\), no type with \(\mu(\mathbf{p}_\mathbf{t}) < \alpha - \delta\) or \(\mu(\mathbf{p}_\mathbf{t}) > \beta + \delta\) does. Since \(F\) is continuous, the claim follows from Theorem~\ref{prop:global_capacity}.
\end{proof}

\begin{proof}[Proof of Theorem~\ref{thm:window_prob_capacity}]
By construction, \(v^{(w)}_{\epsilon}\) counts valid length-\(w\)
sequences, while \(T_{\epsilon}\) extends them through valid weighted
transitions, with each transition weight accounting for the number of
composite symbols in the appended bin. Hence
\[
|\mathcal{V}_{\mathrm{w-GC}}|
=
v^{(w)}_{\epsilon}T_{\epsilon}^{\,n-w}\mathbf{1}^{T}.
\]
Since sequences shorter than \(w\) are unconstrained, every length-\((w-1)\) prefix is valid, which gives \(v^{(w)}_\epsilon = u T_\epsilon\). Let \(m = n-w+1\) and let \(\sigma_m = \mathbf{1} T_\epsilon^{m} \mathbf{1}^T\) be the sum of all entries of \(T_\epsilon^m\). As \(1 \leq u(s) \leq |\mathcal{A}_k|^{w-1}\) for all \(s\), we have \(\sigma_m \leq |\mathcal{V}_{\mathrm{w-GC}}| \leq |\mathcal{A}_k|^{w-1} \sigma_m\). Since \(T_\epsilon\) is non-negative, \(\|T_\epsilon^m\|_\infty \leq \sigma_m \leq (k+1)^{w-1} \|T_\epsilon^m\|_\infty\), where \(\|\cdot\|_\infty\) is the maximum row sum and \((k+1)^{w-1}\) is the number of states. By Gelfand's formula, \(\|T_\epsilon^m\|_\infty^{1/m} \rightarrow \lambda(T_\epsilon)\), and since \(|\mathcal{A}_k|^{w-1}\) and \((k+1)^{w-1}\) do not depend on \(n\), we obtain \(\frac{1}{n}\log_2 |\mathcal{V}_{\mathrm{w-GC}}| \rightarrow \log_2 \lambda(T_\epsilon)\).
\end{proof}

\begin{proof}[Proof of Proposition~\ref{prop:prefix_count}]
Because the type-classes of the allowed types in \(\mathcal{P}_\epsilon\) are mutually disjoint, the total number of valid extensions \(N(\Tilde{\mathbf{S}})\) is exactly the sum of the extensions belonging to each valid type, \(N(\Tilde{\mathbf{S}}) = \sum_{\mathbf{t} \in \mathcal{P}_{\epsilon}} N_\mathbf{t}(\Tilde{\mathbf{S}})\).

To evaluate the extensions for a specific type \(\mathbf{t}\), observe that the prefix \(\Tilde{\mathbf{S}}\) of length \(\ell\) has already consumed \(c_i(\Tilde{\mathbf{S}})\) symbols from each bin \(\mathcal{B}_i\). Therefore, completing a sequence of exactly type \(\mathbf{t}\) requires filling the remaining \(n - \ell\) positions such that exactly \(t_i(\Tilde{\mathbf{S}}) = t_i - c_i(\Tilde{\mathbf{S}})\) positions are assigned to bin \(\mathcal{B}_i\) for every \(i \in \{0, \ldots, k\}\).

If \(t_i(\Tilde{\mathbf{S}}) \leq 0\) for any bin, the prefix has already exceeded the allowed count for that type, yielding $0$ valid extensions. Otherwise, applying the combinatorial logic established in Proposition~\ref{prop:class_cardinality} 
%(Type-Class Cardinality) 
to the remaining \(n - \ell\) positions yields the product of the multinomial partition coefficient \(\frac{(n-\ell)!}{\prod_{i=0}^{k} t_i(\Tilde{\mathbf{S}})!}\) and the independent symbol combinations \(\prod_{i=0}^{k} |\mathcal{B}_i|^{t_i(\Tilde{\mathbf{S}})}\).
\end{proof}

\section{Deterministic Baseline Capacities}\label{app:deterministic}

\begin{corollary}[Deterministic Global GC Capacity]
\label{cor:det_capacity}
Let \(k \geq 2\), \(\alpha \leq \beta\), and let \(m_k = |\mathcal{A}_k| - 2(k+1)\) denote the number of mixed symbols, i.e., symbols in \(\mathcal{B}_1, \dots, \mathcal{B}_{k-1}\). With \(p_m = 1 - p_0 - p_k\), define
\begin{align}
G(p_0, p_k) = H(p_0, p_k, p_m) &+ (p_0 + p_k) \log_2 (k+1)
\nonumber \\
&+ p_m \log_2 m_k.
\label{eq:def_G_deterministic}
\end{align}
For \(\epsilon = 0\), the capacity is
\begin{align}
C_{\text{global-GC}}(0) = \max_{p_0 \geq 1-\beta, \; p_k \geq \alpha, \; p_0 + p_k \leq 1} G(p_0, p_k).
\end{align}
If \(\frac{\alpha}{k+1} \geq \frac{\beta - \alpha}{m_k}\) and \(\frac{1-\beta}{k+1} \geq \frac{\beta - \alpha}{m_k}\), the maximum is attained at the boundary, i.e., \(C_{\text{global-GC}}(0) = G(1-\beta, \alpha)\).
\end{corollary}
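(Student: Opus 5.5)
The plan is to first reduce the \(\epsilon=0\) constraint to linear conditions on the type, then apply Theorem~\ref{prop:global_capacity} and maximize \(F\) in closed form.

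\textbf{Step 1: the allowed types.} For \(\epsilon=0\), a type \(\mathbf{t}\) lies in \(\mathcal{P}_0\) iff \(\Pr(Y_\mathbf{t}<n\alpha)=0\) and \(\Pr(Y_\mathbf{t}>n\beta)=0\). The support of \(Y_\mathbf{t}\) is \(\{t_k,\dots,n-t_0\}\): symbols in \(\mathcal{B}_k\) always contribute GC, symbols in \(\mathcal{B}_0\) never do, and every mixed symbol can contribute either. Hence \(\mathcal{P}_0=\{\mathbf{t}: t_k\ge n\alpha,\ t_0\ge n(1-\beta)\}\), which agrees with the Remark on reduction to the deterministic case. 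This constrains only \(t_0\) and \(t_k\); the mixed counts \(t_1,\dots,t_{k-1}\) are free subject to summing to \(n p_m\).

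\textbf{Step 2: maximize over the mixed bins.} Write \(p_m=1-p_0-p_k\). By the grouping (chain) rule for entropy,
\(F(\mathbf{t}) = H(p_0,p_k,p_m) + (p_0+p_k)\log_2(k+1) + p_m\,\Phi\),
using \(|\mathcal{B}_0|=|\mathcal{B}_k|=k+1\). Here \(\Phi\) is the entropy of the conditional bin distribution on \(\{1,\dots,k-1\}\) plus the conditional average of \(\log_2|\mathcal{B}_i|\). As noted after Theorem~\ref{prop:global_capacity}, \(\Phi\) is the entropy of a distribution on the \(m_k\) mixed symbols, so \(\Phi\le\log_2 m_k\). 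Equality holds when \(t_i\) is proportional to \(|\mathcal{B}_i|\), so \(\max F = G(p_0,p_k)\) over the feasible \((p_0,p_k)\).

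\textbf{Step 3: pass to the limit.} The upper bound \(\max_{\mathcal{P}_0}F\le \sup G\) over the continuous region \(\{p_0\ge 1-\beta,\ p_k\ge\alpha,\ p_0+p_k\le1\}\) is immediate. For the lower bound, take a maximizer of \(G\) and round it to a type with \(t_0\ge\lceil n(1-\beta)\rceil\), \(t_k\ge\lceil n\alpha\rceil\), and mixed counts approximately proportional to \(|\mathcal{B}_i|\). Continuity of \(G\) then gives convergence to the supremum. This rounding, in particular when the region is thin (e.g.\ \(\alpha=\beta\)), is the only mildly delicate point; it is handled by taking the ceilings at the boundary.

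\textbf{Step 4: the corner condition.} \(G\) is concave, being entropy plus a linear term. Its partial derivatives are
\(\partial G/\partial p_0=\log_2\frac{(k+1)p_m}{m_k p_0}\) and \(\partial G/\partial p_k=\log_2\frac{(k+1)p_m}{m_k p_k}\).
At \((1-\beta,\alpha)\) we have \(p_m=\beta-\alpha\). The two hypotheses give \(\frac{p_0}{k+1}\ge\frac{p_m}{m_k}\) and \(\frac{p_k}{k+1}\ge\frac{p_m}{m_k}\), so both derivatives are \(\le0\). The feasible directions from this corner only increase \(p_0\) and/or \(p_k\). By concavity, the first-order (KKT) condition therefore certifies the corner as the global maximizer, giving \(C_{\text{global-GC}}(0)=G(1-\beta,\alpha)\).

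I expect Step 2 (the exact grouping identity) and the KKT argument in Step 4 to be the core of the proof; the remaining steps are routine.
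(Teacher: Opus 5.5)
Your proposal is correct and follows the paper's proof essentially step for step: the support argument gives \(\mathcal{P}_0=\{\mathbf{t}: t_k\ge n\alpha,\ t_0\ge n(1-\beta)\}\), the uniform distribution over the \(m_k\) mixed symbols maximizes \(F\) (your grouping identity makes this explicit), continuity passes to the limit, and concavity plus the KKT condition certifies the corner \((1-\beta,\alpha)\), where your gradient computation is the explicit form of the paper's per-symbol comparison \(\frac{p_0}{k+1},\frac{p_k}{k+1}\ge\frac{p_m}{m_k}\). One correction to Step 3: ceilings do not handle the thin case \(\alpha=\beta\), because for \(n\alpha\notin\mathbb{Z}\) they force \(t_0+t_k\ge n+1\), so \(\mathcal{P}_0=\emptyset\), and you must instead rely on the nonemptiness proviso of Theorem~\ref{prop:global_capacity}, as the paper implicitly does.
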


\begin{proof}
Since every symbol in \(\mathcal{B}_k\) always and every symbol in \(\mathcal{B}_1, \dots, \mathcal{B}_{k-1}\) possibly manifests as G or C, the smallest and largest values of \(Y_\mathbf{t}\) with nonzero probability are \(t_k\) and \(n - t_0\). Hence, \(\mathcal{P}_0 = \{\mathbf{t} : t_k \geq n\alpha, \; t_0 \geq n(1-\beta)\}\), in accordance with Definition~\ref{def:det_gc}. The constraint does not distinguish between the bins \(\mathcal{B}_1, \dots, \mathcal{B}_{k-1}\), so for a fixed total mass \(1 - p_0 - p_k\) on these bins, \(F\) is maximized by the uniform distribution over all \(m_k\) mixed symbols. Together with \(|\mathcal{B}_0| = |\mathcal{B}_k| = k+1\) and the continuity of \(F\), Theorem~\ref{prop:global_capacity} yields~(\ref{eq:def_G_deterministic}). The objective is the entropy of a distribution over \(\mathcal{A}_k\) that is uniform within the pure A/T, pure G/C, and mixed symbols, and is therefore concave. At \(p_0 = 1-\beta\), \(p_k = \alpha\), the stated conditions require the per-symbol probability of both constrained groups to be at least that of the mixed symbols, which is exactly the Karush--Kuhn--Tucker condition for both constraints to be active.
\end{proof}

For example, for \(k = 4\), \(\alpha = 0.4\) and \(\beta = 0.6\), the conditions hold and \(C_{\text{global-GC}}(0) \approx 4.31\) bits, i.e., \(84\%\) of \(\log_2 |\mathcal{A}_4|\), whereas \(C_{\text{global-GC}}(\epsilon) = \log_2 |\mathcal{A}_4|\) for every \(\epsilon > 0\) by Remark~\ref{rem:eps_asymptotic}.

\begin{corollary}[Deterministic Worst-Case Baseline]
Under the deterministic baseline (\(\epsilon = 0\)), intermediate probabilities are irrelevant: a window is admissible if and only if it contains at least \(w\alpha\) pure G/C symbols and at most \(w\beta\) symbols that are not pure A/T. It is therefore sufficient to partition the composite alphabet into three bins: pure A/T \(\mathcal{B}_{\text{AT}} = \mathcal{B}_0\), pure G/C \(\mathcal{B}_{\text{GC}} = \mathcal{B}_k\), and mixed \(\mathcal{B}_{\text{mix}} = \bigcup_{i=1}^{k-1} \mathcal{B}_i\) with \(|\mathcal{B}_{\text{mix}}| = m_k\) as in Corollary~\ref{cor:det_capacity}. The state space reduces to \(3^{w-1}\) states, and \(C_{\mathrm{w-GC}}(0) = \log_2 \lambda(T_0)\), where \(T_0\) is the weighted transition matrix over this 3-bin partition.
\end{corollary}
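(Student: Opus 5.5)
The proof reduces the deterministic windowed constraint to a constraint on three coarse classes of symbols, and then reuses the transfer-matrix argument of Theorem~\ref{thm:window_prob_capacity} with the state space and edge weights changed accordingly. Throughout I assume \(k \geq 2\), as in Corollary~\ref{cor:det_capacity}, so that \(\mathcal{B}_{\text{mix}}\) is nonempty and its size is \(m_k\); for \(k=1\) the mixed class is \(\mathcal{B}_1\) alone and the same argument applies with that weight.

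\textbf{Step 1: \(\epsilon=0\) is the deterministic condition.}
\begin{itemize}
\item Fix a window \([S_i,\ldots,S_{i+w-1}]\) and write \(Y\) for its GC count, a sum of independent \(\mathrm{Bernoulli}(q(S_j))\) variables.
\item For \(\epsilon=0\), Definition~\ref{def:prob_gc} requires \(\Pr(Y<w\alpha)=\Pr(Y>w\beta)=0\). So every value in the support of \(Y\) must lie in \([w\alpha,w\beta]\).
\item The support of \(Y\) is the integer interval from \(\sum_j \mathbb{I}_{\min}(S_j)\) to \(\sum_j \mathbb{I}_{\max}(S_j)\):
  \begin{itemize}
  \item pure G/C symbols contribute \(1\) surely;
  \item pure A/T symbols contribute \(0\) surely;
  \item each mixed symbol has \(0<q<1\), so it contributes \(0\) or \(1\) with positive probability, independently of the others.
  \end{itemize}
\item Hence the window is admissible if and only if \(\sum_j\mathbb{I}_{\min}(S_j)\geq w\alpha\) and \(\sum_j\mathbb{I}_{\max}(S_j)\leq w\beta\). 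This is exactly Definition~\ref{def:det_gc}, read as ``at least \(w\alpha\) pure G/C symbols and at most \(w\beta\) symbols that are not pure A/T.''
\end{itemize}
This is the same support argument used in the proof of Corollary~\ref{cor:det_capacity}, applied to a single window instead of the whole sequence.

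\textbf{Step 2: only the three-class label of each symbol matters.}
\begin{itemize}
\item \(\mathbb{I}_{\min}\) and \(\mathbb{I}_{\max}\) are constant on each of \(\mathcal{B}_{\text{AT}}\), \(\mathcal{B}_{\text{GC}}\) and \(\mathcal{B}_{\text{mix}}\). Therefore whether a window is admissible depends only on its sequence of class labels.
\item This is the analogue of the remark after the Allowed Type Set definition, with the \(k+1\) bins replaced by three classes.
\end{itemize}

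\textbf{Step 3: rebuild the graph of Theorem~\ref{thm:window_prob_capacity}.}
\begin{itemize}
\item Vertices are the \((w-1)\)-tuples of class labels, so there are \(3^{w-1}\) states.
\item Edges follow the same shift rule (i), and condition (ii) is replaced by the criterion from Step~1.
\item An edge appending a class receives as weight the size of that class: \(k+1\), \(k+1\) or \(m_k\), using \(|\mathcal{B}_0|=|\mathcal{B}_k|=k+1\).
\item Merging the bins \(\mathcal{B}_1,\ldots,\mathcal{B}_{k-1}\) into one state label is legitimate because, by Step~2, all of them have identical outgoing admissibility. Their counts therefore simply add: \(\sum_{i=1}^{k-1}|\mathcal{B}_i|=m_k\).
\item Consequently, every valid composite sequence corresponds to exactly one weighted walk in the 3-class graph, and the walk weights count these sequences exactly.
\end{itemize}

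\textbf{Step 4: count and take the limit.}
\begin{itemize}
\item By the same argument as in the proof of Theorem~\ref{thm:window_prob_capacity}, \(|\mathcal{V}_{\mathrm{w-GC}}| = v^{(w)}_0\,T_0^{\,n-w}\,\mathbf{1}^T\), where \(v^{(w)}_0=uT_0\) and \(u(s)\) is the product of the class sizes along \(s\).
\item As there, \(|\mathcal{V}_{\mathrm{w-GC}}|\) is sandwiched between constant multiples of the total entry sum of \(T_0^{m}\), with constants independent of \(n\).
\item Gelfand's formula then gives \(C_{\mathrm{w-GC}}(0)=\log_2\lambda(T_0)\).
\end{itemize}

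\textbf{Main obstacle.} The delicate step is the support claim in Step~1. One must check that the realizable values of \(Y\) form the full integer interval between the two endpoints, and, more importantly, that both endpoints are attained with strictly positive probability. Only then does the zero-tolerance condition say exactly ``both extremes lie in \([w\alpha,w\beta]\).'' Positivity of the endpoint probabilities follows from independence across positions together with \(0<q<1\) for every mixed symbol, which holds by the definition of the mixed bins.
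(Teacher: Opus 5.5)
Your proof is correct and is essentially the paper's own argument made explicit: the paper gives no separate proof, and the justification embedded in the statement rests on the same support argument as the proof of Corollary~\ref{cor:det_capacity} (the extreme values of the window's GC count are the number of pure G/C symbols and the number of non-pure-A/T symbols), the resulting dependence of admissibility on the three class labels only, and the counting and Gelfand argument of Theorem~\ref{thm:window_prob_capacity} on the $3^{w-1}$-state weighted graph. Two minor remarks: your aside on $k=1$ is wrong, since there $\mathcal{B}_1=\mathcal{B}_k$ is the pure G/C bin and the mixed class is empty ($m_1=0$); and the ``full integer interval'' property you list under the main obstacle is not needed, because the support lies in $[w\alpha,w\beta]$ as soon as its two extreme values, which are attained with positive probability, do.
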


Note that \(T_\epsilon = T_0\) whenever \(\epsilon\) is smaller than the smallest nonzero violation probability of any length-\(w\) window, in which case the probabilistic and deterministic sliding-window capacities coincide. For the parameters of Remark~\ref{rem:eps_window} and \(w=5\), this minimum is \(1/16\), attained, e.g., by a window with one symbol from \(\mathcal{B}_0\), two from \(\mathcal{B}_1\), and two from \(\mathcal{B}_4\). Hence, \(\epsilon = 0.05\) yields the same capacity as \(\epsilon = 0\).

\section{Examples}\label{app:examples}

Intuitively, as the window shifts by one position, the oldest symbol leaves the window and a new symbol enters. The state therefore records the overlapping portion between consecutive windows, while each transition corresponds to appending a new probability bin. A transition is allowed whenever the resulting window satisfies the local \(\epsilon\)-probabilistic GC constraint.

\begin{example}
Figure~\ref{fig:fsm_transition} illustrates the state transitions for a window of length \(w=3\) using an alphabet partitioned into three bins. Because \(w=3\), each state is defined by a length-\(2\) sequence of bins (representing the \(w-1\) overlap). If the system is currently in state \((b_1, b_1)\), appending a new bin creates a temporary length-\(3\) window. The transition is permitted if and only if this resulting window satisfies the \(\epsilon\)-probabilistic GC bounds. Assume that appending \(b_1\) or \(b_3\) violates the constraint, while appending \(b_2\) forms a valid window. Consequently, the oldest bin is discarded, and the system successfully transitions to the new state \((b_1, b_2)\).
\end{example}

\begin{figure}
    \centering
    \includegraphics[width=\columnwidth]{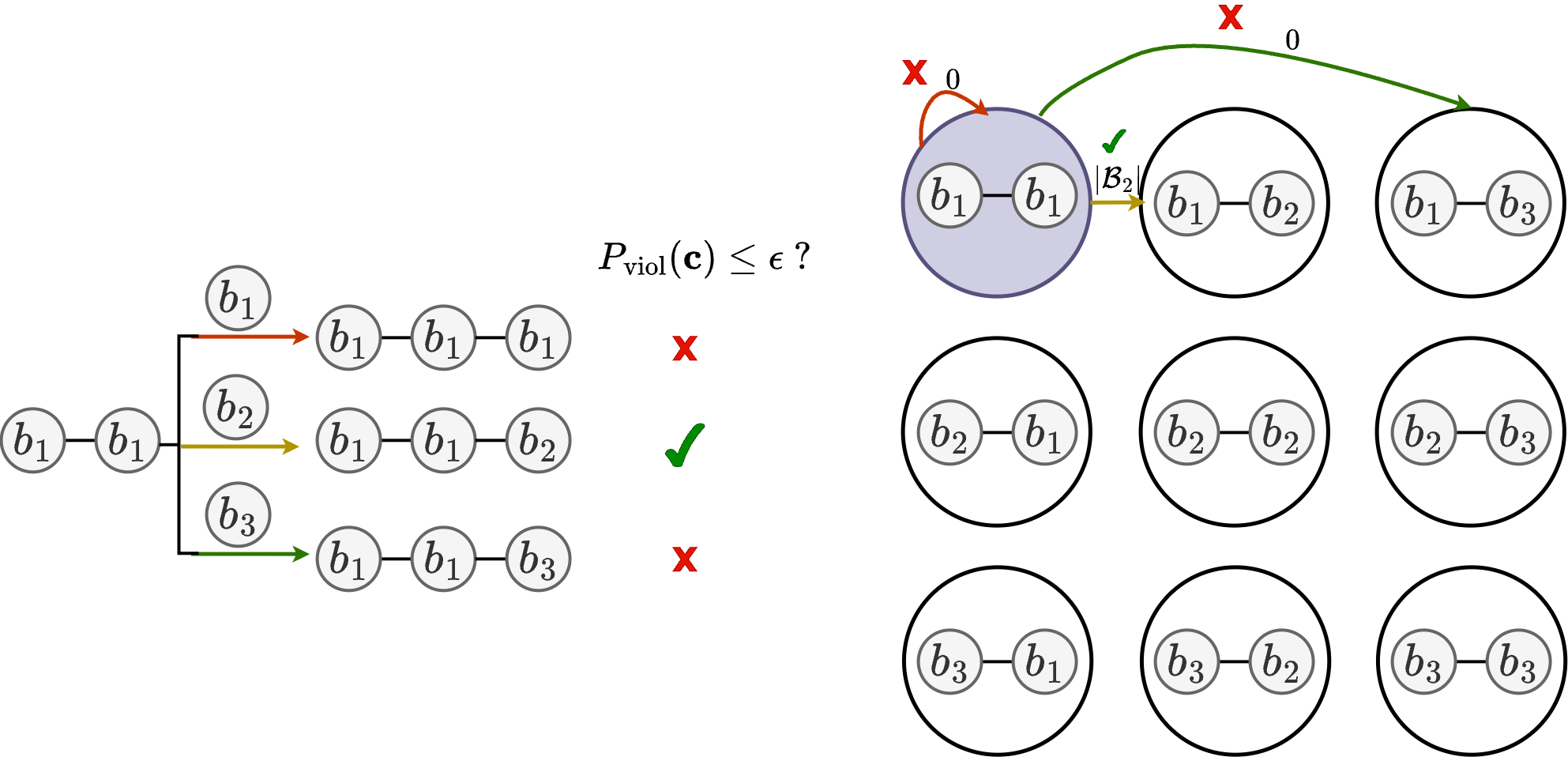}
    \caption{State transition logic for a sliding-window constraint with \(w=3\) and three probability bins (\(b_1, b_2, b_3\)). From the current state \((b_1, b_1)\), appending a new bin creates a length-\(3\) window. Only the addition of \(b_2\) satisfies the GC constraint, permitting a valid transition to the next overlapping state \((b_1, b_2)\).}
    \label{fig:fsm_transition}
\end{figure}

Intuitively, this process can be viewed as a dynamic dictionary lookup. Imagine all probabilistically valid composite sequences sorted lexicographically and laid out uniformly across a continuous line from 0 to 1. A binary message simply acts as a specific coordinate on this line. Instead of generating and storing this massive dictionary beforehand, the encoder constructs the target sequence symbol by symbol, recursively narrowing its search window based on the proportion of valid sequences that remain.
Figure~\ref{fig:arithmetic_tree} visualizes the interval partitioning logic described in Section~\ref{sec:coding_schemes}.

\begin{figure}[htpb]
    \centering
    \includegraphics[width=\columnwidth]{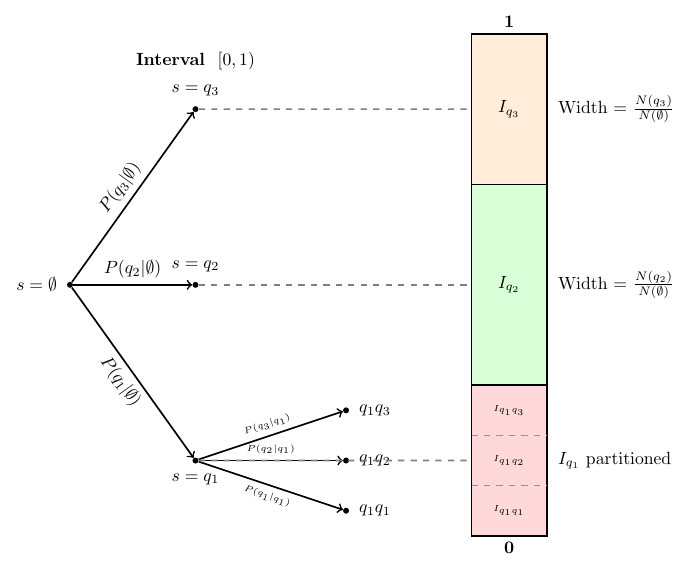} 
    \caption{Conceptual tree and interval partitioning of the Multi-Type Arithmetic Encoder. Extending the prefix by one additional step (\(s = a_1 a_j\)) recursively subdivides the corresponding sub-interval \(I_{a_1}\) proportional to the multi-type continuation counts \(N(a_1 a_j)\).}
    \label{fig:arithmetic_tree}
\end{figure}

\begin{example}[Multi-Type Encoding Trace]
To illustrate the step-by-step mechanics of Algorithm~\ref{alg:encoder}, consider a simplified scenario where not all possible compositions are present and the \(\epsilon\)-probabilistic constraint allows exactly two valid composition types for a strand of length \(n=3\): 
\(\mathcal{P}_\epsilon = \{\mathbf{t}_1, \mathbf{t}_2\}\), where \(\mathbf{t}_1 = (1,1,1)\) and \(\mathbf{t}_2 = (2,1,0)\).

As shown in Figure \ref{fig:worked_example}, the composite alphabet is partitioned into three probability bins (\( \mathcal{B}_0, \mathcal{B}_1, \mathcal{B}_2\)) 
containing 2, 2, and 3 symbols, respectively. Note that this is a toy example: the bin sizes are chosen for illustration and do not correspond to a full composite alphabet \(\mathcal{A}_k\), for which \(|\mathcal{B}_i| = (i+1)(k-i+1)\) would hold.

\begin{figure}[htpb]
    \centering
    \includegraphics[width=1\columnwidth]{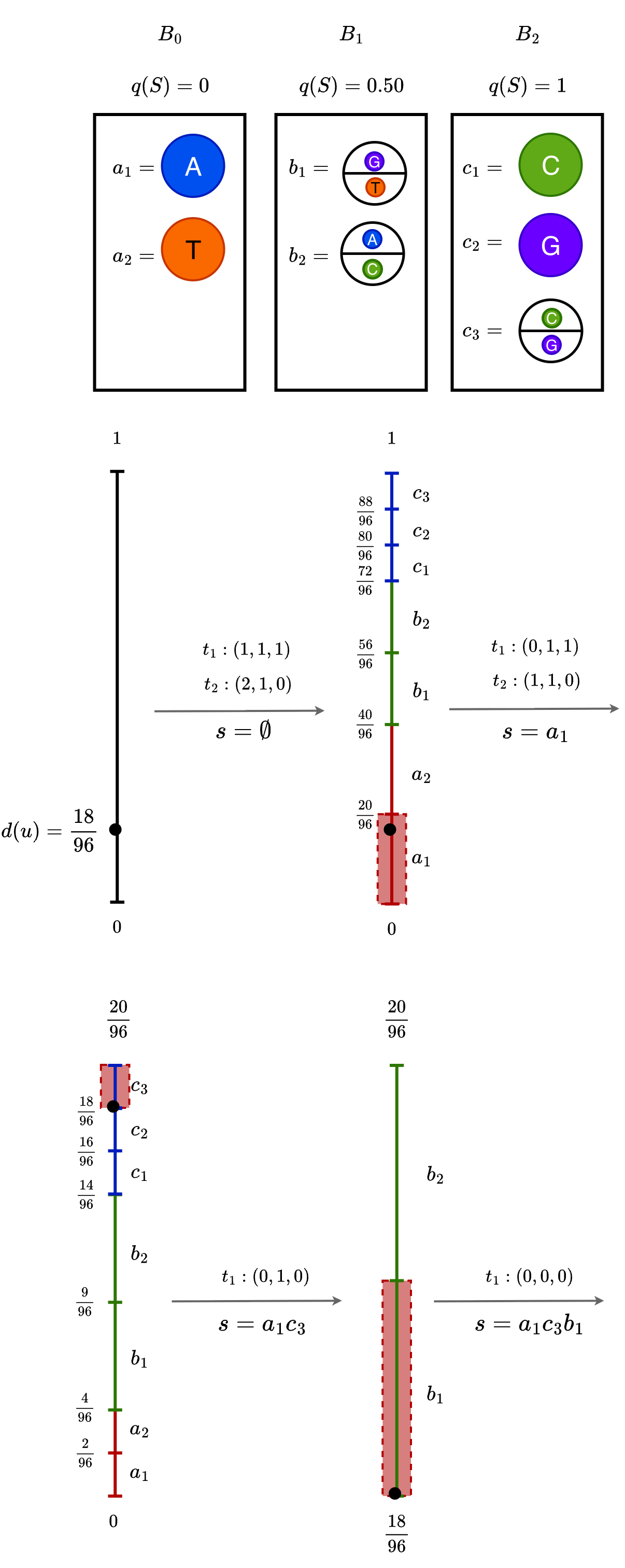}
    \caption{A numerical trace of the multi-type arithmetic encoder for a target fraction \(d = 0.1875\).}
    \label{fig:worked_example}
\end{figure}

By Proposition~\ref{prop:class_cardinality}, we evaluate the cardinality of each allowed type-class:
\begin{itemize}
    \item \(|\mathcal{T}_{\mathbf{t}_1}| = \frac{3!}{1!1!1!} (2^1 \cdot 2^1 \cdot 3^1) = 72\) sequences.
    \item \(|\mathcal{T}_{\mathbf{t}_2}| = \frac{3!}{2!1!0!} (2^2 \cdot 2^1 \cdot 3^0) = 24\) sequences.
\end{itemize}
The total valid codebook size is \(N(\emptyset) = 72 + 24 = 96\) valid sequences. Thus, the encoder can map \(m = \lfloor \log_2(96) \rfloor = 6\) bits. 

Suppose we wish to encode the binary message \(\mathbf{b} = 001100\). The target fraction is computed as \(d = \operatorname{NBC}(001100) / 2^6 = 12/64 = 18/96 = 0.1875\).

The encoder initializes the interval \([0,1)\) and evaluates the candidate symbols for the first position. For candidate \(a_1 \in \mathcal{B}_0\), we must compute the unified prefix-completion count \(N(a_1)\) by summing the continuations from both types via Proposition~\ref{prop:prefix_count}:
\begin{itemize}
    \item For \(\mathbf{t}_1\), the remaining requirement is \((0,1,1)\). The continuations are \(N_{\mathbf{t}_1}(a_1) = \frac{2!}{0!1!1!} (2^0 \cdot 2^1 \cdot 3^1) = 12\).
    \item For \(\mathbf{t}_2\), the remaining requirement is \((1,1,0)\). The continuations are \(N_{\mathbf{t}_2}(a_1) = \frac{2!}{1!1!0!} (2^1 \cdot 2^1 \cdot 3^0) = 8\).
\end{itemize}
Thus, the total valid continuations are \(N(a_1) = 12 + 8 = 20\). 

The corresponding interval width for \(a_1\) is proportional to this sum: \(P(a_1|\emptyset) = 20/96\). Because our target fraction \(d = 18/96\) falls within the sub-interval \([0, 20/96)\), the encoder successfully selects \(a_1\). This process recurses, dynamically pruning types when their requirements drop below zero, until the final length-$3$ codeword is isolated.
\end{example}

\end{document}